\def\L{\mathcal{L}}
\def\row#1#2{{#1}_1,\ldots ,{#1}_{#2}}
\newtheorem{prop}{Proposition}[section]
\newtheorem{theorem}[prop]{Theorem}
\newtheorem{example}[prop]{Example}
\newtheorem{cor}[prop]{Corollary}
\newtheorem{lemma}[prop]{Lemma}
\newtheorem{definition}[prop]{Definition}
\title{Generalizing the Single-Crossing Property on Lines and Trees to Intermediate Preferences on Median Graphs
%\title{Intermediate preferences on median graphs
%\thanks{The third author wish to acknowledge partial support from the Marsden Fund grant 3706352.}
}
\author{Adam Clearwater \\
The University of Auckland\\
Auckland, New Zealand \\
acle553@aucklanduni.ac.nz
\And
Clemens Puppe\\
Karlsruhe Institute of Technology\\
Karlsruhe, Germany \\
clemens.puppe@kit.edu
\And
Arkadii Slinko\\
The University of Auckland\\
Auckland, New Zealand \\
a.slinko@auckland.ac.nz
}
\begin{document}

\maketitle

\begin{abstract}
Demange (2012) generalized the classical single-crossing property to the intermediate property on median graphs and proved that the representative voter theorem still holds for this more general framework. We complement her result with proving that the linear orders of any profile which is intermediate on a median graph  form a Condorcet domain. We prove that for any median graph there exists a profile that is intermediate with respect to that graph and that one may need at least as many alternatives as vertices to construct such a profile.  We provide a polynomial-time algorithm to recognize whether or not a given profile is intermediate with respect to some median graph. Finally, we show that finding winners for the Chamberlin-Courant rule is polynomial-time solvable  for profiles that are single-crossing on a tree. 
\end{abstract}

\section{Introduction}

Condorcet's famous paradox %\cite{Condorcet1785} 
demonstrates that pairwise majority voting may produce
intransitive collective preferences. The question of whether, and if so, how, this problem can be overcome by means
of restrictions on the domain of admissible individual preferences has attracted constant interest over the recent
decades, see \cite{Gaertner2001} for a detailed overview.  Probably the
best-known domain restriction is single-peakedness \cite{Black1948} which is frequently employed in models of political economy. It stipulates that all alternatives can be arranged along one dimension, for instance according to the political left-right spectrum, so that each voter has an ideal choice in the set of alternatives and the alternatives that are further from their ideal choice are preferred less. 
The concept of single-peakedness itself can be generalized considerably, 
however, the sufficiency of single-peakedness for
transitivity of the (strict) majority relation is confined to the classical one-dimensional case only.\footnote{Single-peakedness
on trees still guarantees the existence of a Condorcet winner, and single-peakedness on a median graph the existence
of a ``local'' Condorcet winner \cite{BandeltBarthelemy1984}.}

A different and frequently useful sufficient condition for transitivity of the majority relation is the  single-crossing
property \cite{Mirr71}. %which stipulates that the {\em voters} can be arranged on a one-dimensional linear spectrum. 
A profile of individual preferences is said to have the single-crossing property if the {\em voters} can be arranged on a one-dimensional linear spectrum so that, for all pairs of alternatives $(a,b)$, the set of voters who prefer $a$ to $b$ and also the set of voters who prefer $b$ to $a$
are both convex.  As shown by Rothstein \shortcite{Roth91}, every single-crossing profile has a so-called {\it representative voter}, i.e.,~a voter whose (strict) preference coincides with the (strict) majority relation.\footnote{One can show that such a result does not hold for single-peaked
preferences even on a line.}

Roberts \shortcite{Roberts1977}, Blair and Crawford \shortcite{blair-craw}, Gans Smart \shortcite{GansSmart1996} provide a number of economic applications of single-crossingness among which are voting models of redistributive income taxation and trade union bargaining. All of them represent situations when preferences of individuals depend on a single parameter while in practice we may have a number of them. So there are compelling economic reasons which prompt us to consider single-crossingness on graphs more general than a line. Kung \shortcite{Kung2014}, in particular, argues that single-crossingness on trees can be applied to the study of networks.

 In contrast to the case of single-peaked preferences, the sufficiency of the single-crossing property for transitivity of the strict majority relation generalizes to median graphs \cite{Demange2012}, which, in particular, include trees and lattice graphs. 
%In the present paper, we prove the transitivity of the strict majority relation and a representative voter theorem for single-crossing profiles on trees.
%This result also follows from the analysis of  %Demange  
This generalisation is based on the notion of intermediate preferences \cite{Grandmont1978}. Assume that voters can be indexed by the vertices of a graph, and say that a profile
satisfies the {\em intermediateness property} (or simply that it is {\em intermediate}) with respect to this graph if, for all pairs of alternatives $(a,b)$ and any two voters $i$ and $j$ who prefer $a$ to $b$, all voters that lie on any shortest path between $i$ and $j$ on the graph also prefer $a$ to $b$. Evidently, if the graph is a line, %or more generally a tree, 
a profile satisfies the intermediateness property if and only if it satisfies the single-crossing property. 
%\cite{Demange2012} then proves a representative voter theorem for all median graphs. 
%The result presented here follows from this general result since, as is well known, every tree is a median graph. 
%However, our direct proof for single-crossing profiles on a tree is especially simple and illuminating. 

The purpose of this paper is two-fold. Firstly, we prove that any profile which is intermediate on a median graph gives rise to a Condorcet domain.  We give a constructive proof of the existence of an intermediate profile for any median graph with $n$ vertices (generalising and strengthening the corresponding result of \citeauthor{Kung2014} for trees). We prove that $n$ alternatives are always sufficient and that there exists a median graph (actually a tree) for which intermediate profiles with fewer than $n$ alternatives do not exist. We also give a polynomial-time algorithm that recognizes whether or not a given profile is intermediate with respect to some median graph. Finally, we prove that  the Chamberlin-Courant multi-winner voting rule on single-crossing profiles on trees has a polynomial time winner-determination problem which generalises a similar result of Skowron et al. \!\shortcite{SYFE2013} for the classical single-crossing property. The corresponding problem on median graphs remains open. It is interesting to note that for 
the single-peaked property on a tree only the egalitarian version of the Chamberlin-Courant rule remains polynomial. The classical 
utilitarian version of this rule becomes NP-hard \cite{YuCE13}.

%In papers \cite{Kung2014,CPS2014} some of these results were obtained for single-crossingness on trees. 

%\iffalse
The problem addressed in the present paper is closely related to the search of maximal Condorcet
domains \cite{AbelloJohnson84,Abello91,GR:2008,DKK:2012}; see also the survey on the topic in~\cite{mon:survey}.
Indeed, any intermediate profile on a median graph provides us with a new type of Condorcet domain (although possibly
not maximal). 
%\fi

\section{Preliminaries}

\subsection{Linear orders and profiles}

Let $A$ and $V$ be two finite sets of cardinality 
$m$ and $n$, respectively. The elements of $A$ will be called alternatives, the elements of
$V=\{1,2,\ldots, n\}$ voters. We assume that the voters have preferences over the set of alternatives.  By
$\L(A)$ we denote the set of all (strict) linear orders on $A$; they represent the preferences of
agents over $A$. The elements of the Cartesian product
$
\L(A)^n=\L(A)\times\ldots\times \L(A)\ \ \ \mbox{($n$ times)}
$
are called $n$-profiles or simply profiles. They represent the
collection of preferences of the voters from $V$ over the alternatives from $A$. If a linear order 
$R_i$ represents the preferences of the $i$-th voter, then by
$aR_ib$, where  $a,b\in A$, we denote that this agent prefers $a$ to $b$. We also denote this as $a\succ_i b$.\par\smallskip

Given a profile $R=(\row Rn)$, we say that a linear order $R_j$ is {\em between} orders $R_i$ and $R_k$ if $a\succ_i b$ and $a\succ_k b$ imply $a\succ_j b$ for every pair of alternatives $a,b\in A$. The set of all linear orders from $R$ that are between $R_i$ and $R_k$ is denoted $[R_i,R_k]$.

\begin{definition}
Let $R=(\row Rn)$ be a profile. The {\em majority relation} $M(R)$ of $R$ over $A$  is the binary relation on $A$ such that for any 
$a,b\in A$ we have $a\succeq b$ if and only if
$|\{i\mid a\succ_ib\}|\ge |\{i\mid b\succ_ia \}|$.\par
\end{definition}

We will also write $a\succ b $ if $a\succeq b$ but not $b\succeq a$ and call it the {\em strict majority relation}.  When $n$ is odd, the majority relation coincides with the strict majority relation and is a tournament on $A$, i.e., a complete and asymmetric binary relation. 
%When $n$ is even, we may have an indifference when both $a\succeq b$ and $b\succeq a$  which  we denote as $a\sim b$. 

\begin{definition}
A {\em Condorcet domain} is a set of linear orders $C\subseteq \L(A)$ such that, no matter how many voters in the profile $P$ have each of the linear orders from $C$ as their preference relation,  the strict majority relation of $P$ is  transitive. 
\end{definition}

Condorcet domains have long been of interest to Social Choice scientists and mathematicians alike; see the aforementioned survey by \citeauthor{mon:survey}. For a detailed discussion of single-crossing condition see \cite{BredereckCW13}.

\subsection{Median Graphs, Geodesic Convexity and Geodesic Betweenness}
%\setcounter{equation}{0}

%This section contains the graph-theoretic preliminaries.

%\subsection{Median Graphs}

Let $G=(V,E)$ be a connected graph. The {\em distance} $d(u,v)$ between two vertices $u,v\in V$ will be the smallest number of edges that a path from $u$ to $v$ may contain. While the distance is uniquely defined, there may be several shortest paths from $u$ to $v$. We say that vertex $w$ is {\em geodesically between} vertices $u$ and $v$ if $w$ lies on a shortest path that connects
$u$ and $v$ or, alternatively, $d(u,v)=d(u,w)+d(w,v)$. 
\begin{definition}
A {\em (geodesically) convex} set in a graph $G=(V,E)$ is a subset $C\subseteq V$ such that for any two vertices $u,v\in C$ all vertices of any shortest path between $u$ and $v$ in $G$ lie entirely in~$C$.
\end{definition}

\begin{definition}
A connected graph $G=(V,E)$ is called a {\em median} graph if for any three distinct vertices $u,v,w\in V$ there is a unique vertex $m(u,v,w)$, called {\em median}, which lies on shortest paths from $u$ to $v$, from $u$ to $w$ and from $v$ to $w$. 
\end{definition}

Trees and lattice graphs are examples of median graphs. It is known that median graphs are bipartite and hence do not contain triangles \cite{BandeltBarthelemy1984}.\par\smallskip

To describe the structure of an arbitrary median graph we remind the concept of {\em convex expansion} for graphs. %For subsets $S,T$ of the vertex-set of a graph, $E(S,T)$ denotes the set of edges connecting vertices in $S$ and vertices in $T$.

\begin{definition}
Let $G=(V,E)$ be a graph. 
Let $W_1,W_2\subset V$ be subsets such that $W_1\cup W_2=V$, $W_1\cap W_2\ne \emptyset$ and there are no edges connecting vertices  of $W_1\setminus W_2$ and vertices of $W_2\setminus W_1$. The {\em expansion} of $G$ with respect to $W_1$ and $W_2$ is the graph $G'$ constructed as follows:
\begin{itemize}
\item  each vertex $v\in W_1\cap W_2$ is replaced by two vertices $v^1$, $v^2$ joined by an edge;
\item  $v^1$ is joined to the neighbours of $v$ in $W_1\setminus W_2$ and $v^2$ is joined to the neighbours of $v$ in $W_2\setminus W_1$;
\item if $v,w\in W_1\cap W_2$ and $vw\in E$, then $v^1$ is joined to $w^1$ and $v^2$ is joined to $w^2$.
\end{itemize}
If $W_1$ and $W_2$ are convex, then $G'$ will be called a {\em convex expansion} of $G$. 
\end{definition}

\begin{example}[Convex expansion]
In the graph $G$ shown on the left of the following figure we set $W_1=\{a,b,c,d\}$ and $W_2=\{c,d,e,f\}$. These are convex and their intersection $W_1\cap W_2=\{c,d\}$ is not empty. Also the vertices of $W_1\setminus W_2=\{a,b\}$ and $W_2\setminus W_1=\{e,f\}$ have no edges between them. On the right we see the graph $G'$ obtained by the convex expansion of $G$ with respect to $W_1$ and $W_2$.
\begin{center}
\resizebox{8.4cm}{!}{\includegraphics{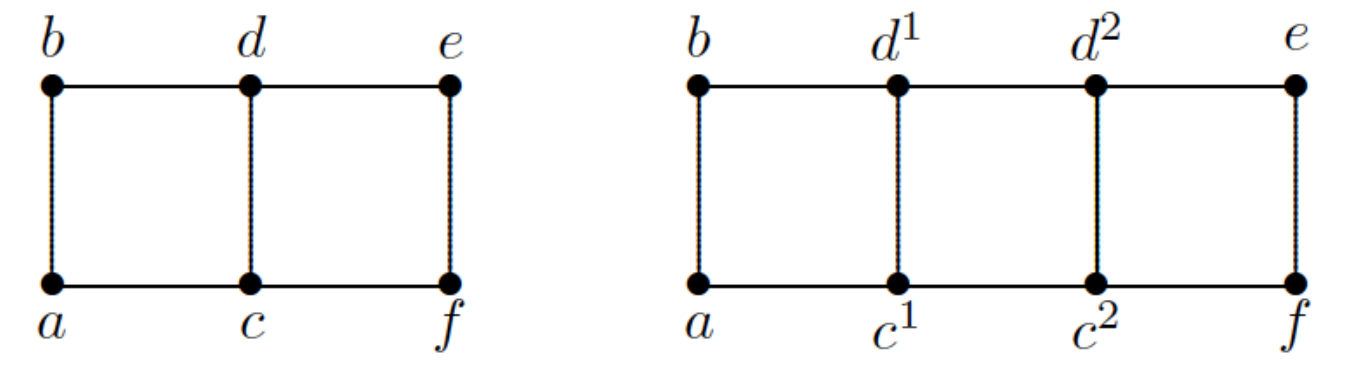}}
\end{center}
\end{example}

The following important theorem about median graphs is due to Mulder \shortcite{mulder}.

\begin{theorem}[Mulder's convex expansion theorem]
A graph is a median graph if it can be obtained from a trivial one-vertex graph by repeated convex expansions. 
\end{theorem}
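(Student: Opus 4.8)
The plan is to prove the statement by induction on the number of convex expansions needed to build $G$ from the one-vertex graph. The base case is immediate, since the one-vertex graph is connected and vacuously has the unique-median property. For the inductive step it is enough to show: if $G=(V,E)$ is a median graph and $G'$ is a convex expansion of $G$ with respect to convex sets $W_1,W_2$, then $G'$ is a median graph. Write $W_0=W_1\cap W_2$, which is convex as an intersection of convex sets, and recall that in $G'$ each $v\in W_0$ is split into $v^1,v^2$. I would call a vertex of $G'$ of \emph{type $1$} if it lies in $W_1'=(W_1\setminus W_0)\cup\{v^1:v\in W_0\}$ and of \emph{type $2$} if it lies in $W_2'=(W_2\setminus W_0)\cup\{v^2:v\in W_0\}$; this is a partition of $V(G')$, the only edges between the two parts are the matching edges $v^1v^2$, and the projection $p\colon V(G')\to V$ (sending $v^1,v^2\mapsto v$ and fixing all other vertices) restricts to graph isomorphisms $G'[W_i']\cong G[W_i]$.

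The crucial lemma I would establish is the distance formula
\[
d_{G'}(x,y)=d_G\big(p(x),p(y)\big)+\varepsilon(x,y),
\]
where $\varepsilon(x,y)=1$ if $x,y$ have different types and $0$ otherwise. For the lower bound, $p$ maps each edge of $G'$ to an edge of $G$ or (for matching edges) to a single vertex, so a shortest $x$--$y$ path in $G'$ projects to an $x$--$y$ walk in $G$ whose length is $d_{G'}(x,y)$ minus the number of matching edges used, and a path between vertices of different types must use at least one matching edge. For the upper bound: if $x,y$ have the same type, a $G$-geodesic from $p(x)$ to $p(y)$ stays in $W_1$ (resp.\ $W_2$) by convexity and lifts to a path of equal length in $W_1'$ (resp.\ $W_2'$); if they have different types, every $a$--$b$ path in $G$ with $a\in W_1$, $b\in W_2$ meets $W_0$, so one takes a $G$-geodesic from $p(x)$ to $p(y)$, cuts it at its first vertex $w$ of $W_0$, lifts the first piece into $W_1'$ and the rest into $W_2'$, and bridges them by the edge $w^1w^2$, giving length $d_G(p(x),p(y))+1$.

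With the formula in hand I would fix three distinct vertices $x,y,z$ of $G'$, set $a=p(x),b=p(y),c=p(z)$, and split into cases. If $a,b,c$ are distinct, let $m=m(a,b,c)$ in $G$; since at least two of $x,y,z$ share a type $i$ (and then the two corresponding projections lie in $W_i$, forcing $m\in W_i$), I claim the unique median of $x,y,z$ in $G'$ is the type-$i$ lift of $m$, that is, $m$ itself if $m\notin W_0$ and $m^i$ otherwise. That this vertex lies between each pair is a direct check with the distance formula; uniqueness follows because for any $m'$ between $x,y,z$, expanding $d_{G'}(x,m')+d_{G'}(m',y)=d_{G'}(x,y)$ via the formula and the triangle inequality forces $p(m')$ to lie between $a$ and $b$ (hence $p(m')=m$ by uniqueness in $G$) and pins down the type of $m'$. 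If exactly two of $a,b,c$ coincide, say $a=b\neq c$, then $\{x,y\}=\{v^1,v^2\}$ for some $v\in W_0$; any vertex between the adjacent pair $x,y$ must be $v^1$ or $v^2$, and the distance formula shows that precisely the one sharing $z$'s type lies between all three pairs. The case $a=b=c$ is impossible, since then $x,y,z$ would all lie in the at-most-two-element set $p^{-1}(a)$. This covers all cases, so $G'$ is a median graph and the induction goes through.

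The main obstacle I anticipate is the distance formula, specifically the different-type case of the upper bound: one has to argue that a geodesic crosses $W_0$ in a single contiguous stretch and lift its two halves into $W_1'$ and $W_2'$ compatibly, and this is the only point where convexity of $W_1$ and $W_2$ is genuinely used. Once the formula is secured, the median computations reduce to routine substitution, and the degenerate cases of coinciding projections are handled by the same formula applied to the matching edges.
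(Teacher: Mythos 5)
The paper does not prove this statement at all: it is quoted as a known structural result and attributed to Mulder, so there is no in-paper proof to compare yours against. Your self-contained argument for the stated direction (repeated convex expansions yield a median graph) is correct. The induction reduces to showing that a single convex expansion preserves the median property, and your distance formula $d_{G'}(x,y)=d_G(p(x),p(y))+\varepsilon(x,y)$ is the right tool: the lower bound via projection of paths (matching edges collapse to vertices, and a path between different types uses at least one matching edge) and the upper bound via lifting geodesics (convexity of $W_1$, $W_2$ keeps the two halves of a geodesic through the first $W_0$-vertex inside $W_1$ and $W_2$ respectively, the latter because a subpath of a geodesic is a geodesic) both go through. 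The median verification is routine once the formula is in hand, and your degenerate cases are handled correctly: $a=b\neq c$ forces $\{x,y\}=\{v^1,v^2\}$, and $a=b=c$ is impossible since fibres of $p$ have at most two elements. Two small points to make explicit in a polished write-up: in the uniqueness step for distinct $a,b,c$ you should expand the betweenness identity for all three pairs (not just the pair $x,y$) before invoking uniqueness of $m(a,b,c)$ in $G$, and you use the triangle inequality for $\varepsilon$ (i.e., that the type indicator is a cut pseudometric) alongside that for $d_G$ to force both error terms to vanish; both facts are immediate, and your sketch clearly contains them. Note also that the paper's phrasing is only the ``if'' direction of Mulder's theorem (the full result is an equivalence), so your argument proves exactly what is stated.
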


%\begin{definition}
%A graph satisfies the {\em Helly property} if for any family $\mathcal H$ of convex sets in which every two sets have a non-empty intersection we have $\bigcap_{H\in \mathcal H} H \ne \emptyset$.
%\end{definition}

%\begin{proposition}
%Any median graph satisfies the Helly property, that is, for any family $\mathcal H$ of convex sets in which every two have a non-empty intersection we have $\bigcap_{H\in \mathcal H} H \ne \emptyset$. 
%\end{proposition}

%\begin{proof}
% Let us consider any three convex sets $A,B,C\in {\mathcal H}$ with non-empty pairwise intersections. Consider any vertices $u\in A\cap B$, $v\in B\cap C$ and $w\in C\cap A$. Then $m(u,v,w)\in A\cap B\cap C$ and $A\cap B\cap C$ is not empty.
%Let $\mathcal H$ be a family of convex sets in which every two sets have a non-empty intersection. Suppose $D_i\in {\mathcal H}$ and $D_1\cap \ldots \cap D_n=\emptyset$ and $n$ is minimal with this property. Then $n\ge 4$ and we may consider $A=D_1\cap \ldots \cap D_{n-2}$, $B=D_{n-1}$, $C=D_n$, all of which are non-empty and convex and have non-empty pairwise intersections. This contradicts to the statement above that any three sets have non-empty intersection.
%\end{proof}

\section{Intermediateness property}

Let $G=(V,E)$ be a graph with $V=\{1,\ldots,n\}$ and $R=(\row Rn)$ be a profile. We will consider the linear orders of $R$ as indexed by vertices of $G$.  

\begin{theorem}
\label{equiv3items}
Let $G=(V,E)$ be a graph with $V=\{1,\ldots,n\}$ and $R=(\row Rn)\in \L(A)^n$ be a profile.
The following conditions are equivalent:
\begin{enumerate}
\item[(i)] $R_j$ is between $R_i$ and $R_k$ whenever $j$ is geodesically between $i$ and $k$;
\item[(ii)]  for every ordered pair of alternatives $(a,b)\in A^2$ the set $V_{ab} =\{i\in V\mid a\succ_i b\}$ is convex in $G$;
\item[(iii)] for every shortest path $\row ik$ in $G$ between $i_1$ and $i_k$ the profile $(R_{i_1},\ldots, R_{i_k})$ is classical single-crossing profile relative to the order of voters determined by that path. 
\end{enumerate}
\end{theorem}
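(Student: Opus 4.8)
The plan is to prove the theorem by establishing the cycle of implications $(i)\Rightarrow(ii)\Rightarrow(iii)\Rightarrow(i)$, unpacking each condition in terms of the basic definitions of ``between'' for linear orders, ``convex'' for vertex sets, and the classical single-crossing property. The unifying object throughout is, for each ordered pair $(a,b)$, the set $V_{ab}=\{i\in V\mid a\succ_i b\}$; note that $V$ is the disjoint union of $V_{ab}$ and $V_{ba}$.

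For $(i)\Rightarrow(ii)$: fix $(a,b)$ and take $i,k\in V_{ab}$ and any $j$ geodesically between $i$ and $k$. By $(i)$, $R_j$ is between $R_i$ and $R_k$, so from $a\succ_i b$ and $a\succ_k b$ we get $a\succ_j b$, i.e.\ $j\in V_{ab}$. Hence $V_{ab}$ is convex. For $(ii)\Rightarrow(iii)$: let $\row ik$ be a shortest path between $i_1$ and $i_k$. For any pair $(a,b)$, the set $\{t\in\{1,\ldots,k\}\mid a\succ_{i_t}b\}$ is $\{t\mid i_t\in V_{ab}\}$; since every vertex on the path lies geodesically between the endpoints and, more generally, any sub-path is itself a shortest path, convexity of $V_{ab}$ in $G$ forces this set of indices to be an interval of $\{1,\ldots,k\}$. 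Since this holds for every ordered pair and the indices are linearly ordered along the path, the restricted profile $(R_{i_1},\ldots,R_{i_k})$ has, for each unordered pair $\{a,b\}$, the voters split into an initial segment preferring one way and a final segment preferring the other way --- which is exactly the classical single-crossing property with respect to the path order. For $(iii)\Rightarrow(i)$: suppose $j$ is geodesically between $i$ and $k$, so $j$ lies on some shortest path $\row{i}{k}$ with $i_1=i$, $i_k=k$, and $i_t=j$ for some $t$. If $a\succ_i b$ and $a\succ_k b$, then along this path the set of positions preferring $a$ over $b$ is an interval (by the single-crossing property of the restricted profile from $(iii)$) containing both endpoints $1$ and $k$, hence contains $t$; thus $a\succ_j b$. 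As $(a,b)$ was arbitrary, $R_j$ is between $R_i$ and $R_k$.

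The main technical point to be careful about --- and the step I expect to require the most attention --- is the claim used in both $(ii)\Rightarrow(iii)$ and $(iii)\Rightarrow(i)$ that every contiguous sub-path of a shortest path is again a shortest path, so that convexity can be applied to arbitrary pairs of vertices along the path (not merely to the two endpoints). This is a standard fact about geodesics in graphs, but it should be stated explicitly, since without it one only gets that each endpoint's ``preference region'' meets the path in a set containing both ends, not that it is a genuine interval. Everything else is a direct translation between the three reformulations and involves no computation.

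It is worth remarking that none of the three equivalent conditions uses the median property of $G$; the equivalence holds for arbitrary connected graphs. The median structure only becomes relevant later, when one wants to deduce transitivity of the majority relation (the Condorcet domain property) from intermediateness.
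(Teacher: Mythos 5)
Your proof is correct and follows essentially the same route as the paper: the same cycle $(i)\Rightarrow(ii)\Rightarrow(iii)\Rightarrow(i)$, with the same arguments, just spelled out in more detail (in particular, you make explicit the fact that subpaths of geodesics are geodesics, and the step $(iii)\Rightarrow(i)$, which the paper dismisses as a standard property of single-crossing profiles). Your closing remark that the equivalence needs no median assumption also matches the paper, whose statement is for arbitrary graphs.
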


\begin{proof}
(i) $\Rightarrow$ (ii). Let $i,k\in V_{ab}$. Consider $j$ on any shortest path from $i$ to $k$.  Then by (i) $R_j$ is between $R_i$ and $R_k$ and since $i$ and $k$ agree on the pair $a,b$ we have $a\succ_j b$, whence $j\in V_{ab}$.

(ii) $\Rightarrow$ (iii). Let $I=\{\row ik\}$ As $I$ is a shortest path and $V_{ab}$ is convex, then $V_{ab}\cap I$ is a subpath. But $V_{ba}\cap I$ is also a subpath and this can happen only when there exists $\ell$ such that $V_{ab}=\{\row i{\ell}\}$ and $V_{ba}=\{i_{\ell+1},\ldots, k\}$.

(iii) $\Rightarrow$ (i). Follows from properties of classical single-crossing profiles.
\end{proof}

\begin{definition}
The profile $R=(\row Rn)$ is said to be {\em intermediate} on the graph $G=(V,E)$ if one of the equivalent conditions (i) - (iii) of  Theorem~\ref{equiv3items} holds.
\end{definition}

Let us consider several examples.

\begin{example}
\label{threecases}
{\bf (a) Classical single-crossingness.} A classical single-crossing profile $R=(\row Rn)$ is intermediate on a path:
 \begin{center}
\resizebox{5.6cm}{!}{\includegraphics{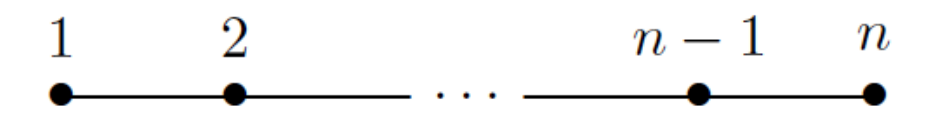}}
\end{center}

{\bf (b) Single-crossing on a tree.}
Consider the profile $R=(R_1,R_2,R_3,R_4)$ on the set $\{ a,b,c,d\}$ consisting of the following
four orders:  $R_1 = acbd$, $R_2 = abcd$, $R_3 = abdc$, and $R_4 = bacd$. 
As is easily seen, this profile is intermediate (single-crossing) on the following tree:
 \begin{center}
\resizebox{3cm}{!}{\includegraphics{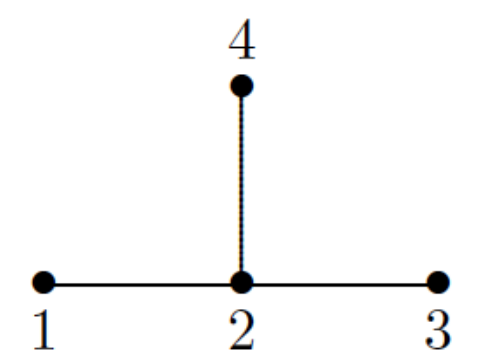}}
\end{center}

{\bf (c) Intermediate profile on a lattice graph.} The following profile $R$ is made of a set of linear orders which is a maximal Condorcet domain:
$R_1:=abcd$, $R_2:=abdc$, $R_3:=bacd$, $R_4:=badc$,
$R_5:=cdab$, $R_6:=dcab$, $R_7:=cdba$, $R_8:=dcba$.
%
%\begin{align*}
%&R_1:=abcd,\ R_2:=abdc,\ R_3:=bacd,\ R_4:=badc,\\
%&R_5:=cdab,\ R_6:=dcab,\ R_7:=cdba,\ R_8:=dcba.
%\end{align*}
It can be checked that $R$ is intermediate on the cube: 
 \begin{center}
\resizebox{3.2cm}{!}{\includegraphics{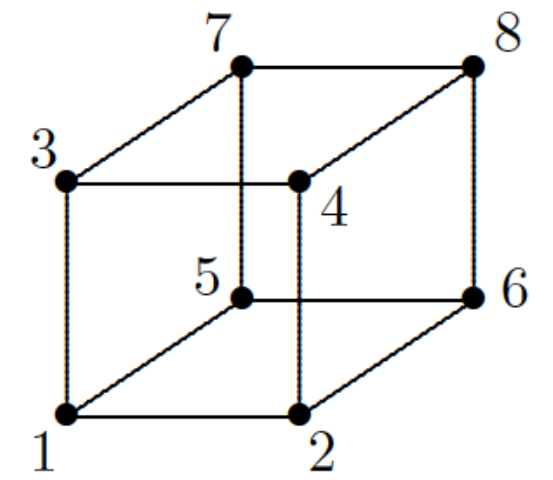}}
\end{center}
\end{example}

\begin{definition}
Let $P\in \L(A)^n$ be a profile which is intermediate on a median graph $G=(V,E)$ and $a,b\in A$. An edge $uv\in E$ is said to be an $ab$-{\em cut} if $a\succ_u b$ and $b\succ_v a$.
\end{definition}

Unlike for trees  an $ab$-cut may not be unique (although \cite{Demange2012} claims that it is). Given an edge $e$ let us denote $S(e)=\{ (a,b)\in A^2\mid \text{$e$ is an $ab$-cut}\}$.

\begin{lemma}
\label{cuts_distances}
Let $P$ be a profile which is intermediate on a median graph $G=(V,E)$. An edge $e=uv \in E$ is an $ab$-cut if and only if
$V_{ab}=\{w\in V\mid d(w,u)<d(w,v)\}$, and $V_{ba}=\{w\in V\mid d(w,u)>d(w,v)\}$. 
\end{lemma}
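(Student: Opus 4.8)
The plan is to exploit two facts: that median graphs are bipartite (as noted in the preliminaries), and that intermediateness makes \emph{both} $V_{ab}$ and $V_{ba}$ convex (Theorem~\ref{equiv3items}(ii)). First I would record the bipartite observation: since $G$ has no odd cycle, for the edge $e=uv$ and any vertex $w$ the distances $d(w,u)$ and $d(w,v)$ differ by exactly $1$ --- they differ by at most $1$ by the triangle inequality applied to the edge $uv$, and they cannot be equal, since otherwise concatenating a shortest $w$--$v$ path with the edge $vu$ and a shortest $u$--$w$ path would yield a closed walk of odd length. Hence $V$ is the disjoint union of $W_u:=\{w\mid d(w,u)<d(w,v)\}$ and $W_v:=\{w\mid d(w,u)>d(w,v)\}$, with $u\in W_u$ and $v\in W_v$.

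The ``if'' direction is then immediate: if $V_{ab}=W_u$ then $a\succ_u b$ because $u\in W_u$, and $b\succ_v a$ because $v\in W_v=V\setminus W_u=V_{ba}$, so $e$ is an $ab$-cut.

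For the ``only if'' direction, assume $e=uv$ is an $ab$-cut, so $u\in V_{ab}$ and $v\in V_{ba}$. I would show $V_{ab}\subseteq W_u$ as follows: suppose some $w\in V_{ab}$ lies in $W_v$, i.e.\ $d(w,u)=d(w,v)+1$. Then a shortest $w$--$v$ path followed by the edge $vu$ is a $w$--$u$ path of length $d(w,v)+1=d(w,u)$, hence a geodesic, so $v$ is geodesically between $w$ and $u$. Since $w,u\in V_{ab}$ and $V_{ab}$ is convex, this forces $v\in V_{ab}$, contradicting $v\in V_{ba}$. Symmetrically, using convexity of $V_{ba}$ (if $w\in V_{ba}\cap W_u$ then $u$ is geodesically between $w$ and $v$, both in $V_{ba}$, forcing $u\in V_{ba}$, a contradiction) one gets $V_{ba}\subseteq W_v$. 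Since $\{V_{ab},V_{ba}\}$ and $\{W_u,W_v\}$ are both partitions of $V$, these two inclusions upgrade to the equalities $V_{ab}=W_u$ and $V_{ba}=W_v$.

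I do not anticipate a serious obstacle; the one point needing a moment's care is the step that turns the distance equation $d(w,u)=d(w,v)+1$ into the statement that $v$ actually lies on a geodesic from $w$ to $u$ (and its symmetric counterpart), since it is precisely this that lets the convexity of the preference sets be invoked. Bipartiteness of median graphs is what guarantees the clean two-block partition $V=W_u\cup W_v$ that makes the inclusions collapse to equalities.
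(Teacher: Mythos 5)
Your proof is correct, and it takes a slightly different route from the paper's. The paper argues via the median axiom itself: for $w\in V_{ab}$ it considers the median $m(w,u,v)$, which must be $u$ or $v$ since $u,v$ are adjacent, and rules out $v$ by intermediateness (convexity of $V_{ab}$), concluding $d(w,u)=d(w,v)-1$; the converse is dismissed as ``clear.'' You instead never invoke the median property: you use only the bipartiteness of $G$ (recorded in the preliminaries) to get the clean two-block partition $V=W_u\cup W_v$ with $|d(w,u)-d(w,v)|=1$, and then convexity of $V_{ab}$ and $V_{ba}$ to exclude $V_{ab}\cap W_v$ and $V_{ba}\cap W_u$, upgrading the inclusions to equalities via the two partitions. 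Both arguments hinge on the same engine, namely Theorem~\ref{equiv3items}(ii), but yours is marginally more general (it works on any bipartite graph on which both preference sets are convex) and is more complete where the paper is terse: it spells out the ``if'' direction and the partition step that turns $V_{ab}\subseteq W_u$, $V_{ba}\subseteq W_v$ into the stated equalities, including the small verification that the distance identity $d(w,u)=d(w,v)+1$ really places $v$ on a geodesic from $w$ to $u$ so that convexity applies.
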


\begin{proof}
Let $w\in V_{ab}$. Then, due to intermediateness, among $w,u,v$  only $u$ can be the median. Hence $wuv$ is the shortest path from $w$ to $v$ and $d(w,u)=d(w,v)-1$ and $d(w,u)<d(w,v)$. The converse is clear.
\end{proof}

\begin{cor}
\label{layers}
Let $P$ be a profile which is intermediate on a median graph $G=(V,E)$. Suppose $\row ek\in E$ are $ab$-cuts. Then $S(e_1)=\ldots=S(e_k)$.
\end{cor}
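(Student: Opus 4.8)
The statement says: if $e_1,\ldots,e_k$ are all $ab$-cuts in an intermediate profile on a median graph, then $S(e_1)=\cdots=S(e_k)$, i.e.\ any edge that is an $ab$-cut cuts exactly the same set of ordered pairs of alternatives as every other $ab$-cut. Since all edges $e_1,\ldots,e_k$ are $ab$-cuts, it suffices to show: if $e$ and $f$ are both $ab$-cuts, then for every pair $(c,d)$, $e$ is a $cd$-cut iff $f$ is a $cd$-cut; by symmetry it is enough to prove one direction. The natural engine is Lemma~\ref{cuts_distances}: an edge $e=uv$ being an $ab$-cut pins down $V_{ab}$ and $V_{ba}$ completely as the two ``distance halves'' determined by $u$ and $v$.

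The key observation is that once we know $e_1=u_1v_1$ and $e_2=u_2v_2$ are both $ab$-cuts (with labels chosen so $a\succ_{u_i}b$), Lemma~\ref{cuts_distances} forces
\[
V_{ab}=\{w\mid d(w,u_1)<d(w,v_1)\}=\{w\mid d(w,u_2)<d(w,v_2)\},
\]
and likewise for $V_{ba}$. So $u_1,u_2\in V_{ab}$ and $v_1,v_2\in V_{ba}$, and both edges ``separate'' $V$ into the same two convex pieces $V_{ab}\sqcup V_{ba}$. Now suppose $e_1$ is also a $cd$-cut, say $c\succ_{u_1}d$ and $d\succ_{v_1}c$. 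Then by Lemma~\ref{cuts_distances} again, $V_{cd}=\{w\mid d(w,u_1)<d(w,v_1)\}=V_{ab}$ and $V_{dc}=V_{ba}$. Hence $c\succ_w d$ for every $w\in V_{ab}$, in particular for $w=u_2$, and $d\succ_w c$ for every $w\in V_{ba}$, in particular for $w=v_2$; since $u_2v_2=e_2\in E$, this says exactly that $e_2$ is a $cd$-cut. Symmetrically every $cd$-cut of $e_2$ is a $cd$-cut of $e_1$, so $S(e_1)=S(e_2)$, and transitivity across $e_1,\ldots,e_k$ finishes the corollary.

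I do not expect a real obstacle here: the corollary is essentially a direct unpacking of Lemma~\ref{cuts_distances}, which already did the geometric work (the median argument showing that $V_{ab}$ equals a distance-half). The one point that needs a sentence of care is consistency of orientation: for a given $ab$-cut $e=uv$ the labelling of endpoints as ``$u$-side'' versus ``$v$-side'' is determined (up to swapping the names $u\leftrightarrow v$) by which side lies in $V_{ab}$, so when I say ``$e_1$ and $e_2$ are both $ab$-cuts'' I should fix the endpoint on the $a$-side of each as $u_1,u_2$; then the displayed equality of $V_{ab}$ from Lemma~\ref{cuts_distances} is literally an equality of the corresponding distance-halves, and the argument above goes through verbatim. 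A one-line remark that $S(e)=S(e')$ is an equivalence relation on the set of $ab$-cuts then reduces the $k$-edge statement to the two-edge statement.
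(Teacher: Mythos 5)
Your proof is correct and is essentially the paper's own argument: both use Lemma~\ref{cuts_distances} to conclude that if some $ab$-cut is also a $cd$-cut then $V_{cd}=V_{ab}$ and $V_{dc}=V_{ba}$, so the endpoints of every other $ab$-cut automatically lie on the correct sides and that edge is a $cd$-cut as well. Your extra remarks on orientation and on symmetry/transitivity across the $k$ edges are fine but add nothing beyond what the paper leaves implicit.
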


\begin{proof}
Suppose $e_1=u_1v_1$ is an $ab$-cut and also a $cd$-cut. Then by Lemma~\ref{cuts_distances} $V_{ab}=V_{cd}$ and $V_{ba}=V_{dc}$. Since $e_i=u_iv_i$ is an $ab$-cut, then $u_i\in V_{ab}=V_{cd}$ and $v_i\in V_{ba}=V_{dc}$.  Hence $e_i$ is also a $cd$-cut.
\end{proof}

%The latest examples very instructive. In fact, the following is true.
We need the following technical lemma now.

\begin{lemma}
\label{l1}
Let $P=(P_1^{k_1},\ldots, P_n^{k_n})$ be a profile, where linear order $P_i$ is repeated $k_i$ times and $P_i\ne P_j$ if $i\ne j$.  If  $P$ is intermediate on a median graph $G$, then the  profile $\hat{P}=(\row Pn)$ is also intermediate on a median graph.
\end{lemma}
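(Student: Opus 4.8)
The plan is to show that collapsing repeated linear orders in an intermediate profile preserves intermediateness by exhibiting a median graph on which the reduced profile $\hat P$ is intermediate. The natural candidate is a quotient-type construction: start from the median graph $G$ on which $P=(P_1^{k_1},\ldots,P_n^{k_n})$ is intermediate, and observe that since each order $P_i$ occurs on $k_i$ vertices of $G$, the vertex set of $G$ is partitioned into "colour classes" $C_i=\{v\in V(G)\mid R_v=P_i\}$. The first step is to check that each $C_i$ is geodesically convex in $G$: if $u,w\in C_i$ then $R_u=R_w=P_i$, so for every ordered pair $(a,b)$ both endpoints agree, hence every vertex $j$ geodesically between them has $R_j$ between $R_u$ and $R_w$, forcing $R_j=P_i$ as well (two distinct linear orders cannot both be between a linear order and itself). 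So the colour classes are convex.

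Next I would use Mulder's convex expansion theorem in reverse, together with the observation that contracting a convex set in a median graph to a point—or, more carefully, performing the inverse operation of convex expansion—yields a median graph. Concretely, one can argue by induction on $|V(G)|$: if all colour classes are singletons there is nothing to prove; otherwise some class $C_i$ has at least two vertices, and since median graphs are built by convex expansions, the last expansion step splits some convex set, and one can choose the expansion so that it separates a vertex of $C_i$ from the rest of $C_i$ or merges two monochromatic halves. Contracting the corresponding edge (the "inverse expansion") reduces $|V(G)|$, stays a median graph, and the profile obtained by deleting the duplicated vertex is still intermediate—convexity of every $V_{ab}$ is inherited because $V_{ab}$ in the contracted graph is the image of the convex set $V_{ab}$ in $G$, and images of convex sets under contraction of a convex set remain convex. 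Iterating until every colour class is a singleton produces a median graph $G'$ with vertex set in bijection with $\{P_1,\ldots,P_n\}$, on which $\hat P$ is intermediate.

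The cleanest way to organise the induction is probably to contract one monochromatic edge at a time: pick a colour class $C_i$ with $|C_i|\ge 2$; since $C_i$ is convex and median graphs are connected, $C_i$ induces a connected subgraph, so it contains an edge $uv$ with $R_u=R_v=P_i$. Contract $uv$. One then must verify (a) the result is still a median graph — this is where I would invoke that contracting an edge inside a convex set of a median graph gives a median graph, which follows from Mulder's theorem by recognising such a contraction as an inverse convex expansion — and (b) the induced profile on the contracted graph is still intermediate, which is the convexity-transfer argument above. After finitely many contractions all classes are singletons and $\hat P$ sits on the resulting median graph.

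The main obstacle is step (a): justifying that the edge-contraction I use is genuinely the inverse of a convex expansion, i.e.\ that the result is a median graph. One has to check that when we split the contracted vertex back out, the two sides $W_1,W_2$ are convex in the contracted graph and cover it with nonempty intersection and no edges across $W_1\setminus W_2$ and $W_2\setminus W_1$ — precisely the hypotheses of the convex-expansion definition. I expect this to require a short but careful argument using the median property of $G$ (each vertex outside $C_i$ has a well-defined "gate" into $C_i$, and contracting an edge of $C_i$ merges two gated pieces), rather than anything deep; the convexity-transfer half of the argument, by contrast, is essentially immediate from Theorem~\ref{equiv3items}(ii).
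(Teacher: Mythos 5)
Your overall strategy is the same as the paper's (the paper's proof is a two-line version of exactly this: monochromatic classes are connected, contract them), and the parts you work out in detail are fine: the colour classes $C_i$ are geodesically convex by Theorem~\ref{equiv3items}(i), and the transfer of convexity of the sets $V_{ab}$ to the quotient is unproblematic. The genuine gap is in the step you yourself flag as the main obstacle, step (a), and the route you sketch for it does not work: it is \emph{false} that contracting a single edge lying inside a convex set of a median graph yields a median graph, and such a contraction is \emph{not} in general an inverse convex expansion. Take the $4$-cycle with vertices $1,2,3,4$; the edge $\{1,2\}$ is a convex set, but contracting it produces a triangle, which is not even bipartite, hence not median. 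The inverse of a convex expansion does not contract one edge: it contracts the entire parallel class of edges created by the expansion (all edges $v^1v^2$ with $v\in W_1\cap W_2$, i.e.\ a whole $\Theta$-class), so your one-edge-at-a-time induction, which insists that every intermediate graph be median, breaks down already on $C_4$ with $R_1=R_2$ and $R_3=R_4$.

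The missing idea that repairs this is available from the paper's own Lemma~\ref{cuts_distances} (and Corollary~\ref{layers}): if $uv$ is a monochromatic edge, then every edge $u'v'$ inducing the same split of $V$ into $\{w:d(w,u')<d(w,v')\}$ and its complement must also be monochromatic. Indeed, if $u'v'$ were an $ab$-cut for some pair $(a,b)$, Lemma~\ref{cuts_distances} would give $V_{ab}=\{w:d(w,u')<d(w,v')\}$, which contains exactly one of $u,v$, contradicting $R_u=R_v$. Hence whenever some colour class is non-trivial you can contract an entire parallel class of monochromatic edges simultaneously; this \emph{is} a genuine inverse convex expansion, so Mulder's theorem keeps the quotient median, the induced profile is well defined because every contracted edge is monochromatic, and your convexity-transfer argument then applies verbatim. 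With that replacement your induction (now on the number of parallel classes contracted, terminating when all colour classes are singletons) goes through; without it, step (a) as written is an incorrect claim rather than merely an unchecked one.
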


\begin{proof}
Any two identical linear orders are either neighbors in $G$ or connected by a path of vertices with identical linear orders. Contracting them to a single vertex results in the graph sought for.
\end{proof}

Let $P$ be a profile. By $\mathcal{D}(P)$ we denote the set of all distinct linear orders present in $P$. If all linear orders of $P$ are different, we call the profile $P$ {\em reduced}.

\begin{theorem}
\label{cd_theorem}
Let $P$ be a profile which is intermediate with respect to a median graph $G=(V,E)$. Then  $\mathcal{D}(P)$ is a Condorcet domain.
\end{theorem}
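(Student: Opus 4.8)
The plan is to reduce the problem to the classical single-crossing case by slicing the median graph along a suitable path. By Lemma~\ref{l1} we may assume $P$ is reduced, so that $\mathcal{D}(P)$ is exactly the set of linear orders indexed by $V$. Fix any profile $Q$ drawn from $\mathcal{D}(P)$ with arbitrary multiplicities; I must show the strict majority relation $M(Q)$ is transitive, i.e., contains no $3$-cycle. Suppose for contradiction that $a\succ b$, $b\succ c$ and $c\succ a$ in the strict majority relation of $Q$, and consider the three pairs $(a,b)$, $(b,c)$, $(c,a)$.

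The key step is to find a single shortest path $\pi$ in $G$ along which $Q$ restricts to a profile that already exhibits the same strict majority cycle on $\{a,b,c\}$, thereby contradicting condition (iii) of Theorem~\ref{equiv3items} (classical single-crossing profiles have transitive strict majority relations, being the paradigmatic Condorcet domain). To produce $\pi$, I would look at the sets $V_{ab}, V_{bc}, V_{ca}$ (and their complements), each of which is convex in $G$ by clause (ii). Intersecting two convex sets in a median graph yields a convex set, and the analysis of how these six convex sets partition $V$ should show that one can pick two vertices $u, w$ (say $u$ with $a\succ b$, $b\succ c$ and $w$ with $c\succ a$, or some such extremal choice) so that the shortest $u$--$w$ path crosses each of the three ordered pairs exactly once; restricted to that path, $Q$ is single-crossing in $a,b,c$ with the crossing points arranged so as to still realize the $3$-cycle on the full vote count. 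The cleanest route is probably to argue that \emph{every} voter in $Q$ agrees with some voter on $\pi$ on the triple $\{a,b,c\}$, so the tallies of $a$ vs.\ $b$, $b$ vs.\ $c$, $c$ vs.\ $a$ are inherited, forcing the same cycle on a single-crossing subprofile --- impossible.

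The main obstacle is the combinatorial geometry of the three convex sets and their cuts: on a general median graph an $ab$-cut need not be unique (as the paper explicitly warns), so I cannot simply take "the" crossing edge for each pair. I expect to lean on Lemma~\ref{cuts_distances} and Corollary~\ref{layers}, which show that all $ab$-cuts are parallel and collectively separate $V_{ab}$ from $V_{ba}$ by distance, and on the median property to guarantee that, given three such parallel "cut-bundles," there is a geodesic that crosses all of them in a consistent linear order. Handling the case where two of the three sets $V_{ab}, V_{bc}, V_{ca}$ have empty or trivial overlap --- so that no single voter ranks all of $a,b,c$ in a "cyclic-consistent" way --- will require checking that the majority tallies still cannot form a cycle, which should follow from a counting argument once the path $\pi$ is in hand. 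If this geometric step proves stubborn, the fallback is Mulder's convex expansion theorem: induct on the number of convex expansions building $G$, showing the Condorcet-domain property is preserved under restricting a profile to each side of an expansion and under the expansion step itself.
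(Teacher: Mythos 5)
There is a genuine gap at the heart of your plan: the existence of the single geodesic $\pi$ that ``captures'' the cycle. You assert that every voter in $Q$ agrees on the triple $\{a,b,c\}$ with some voter on $\pi$, so that the tallies on the three pairs are inherited by a classically single-crossing subprofile. No argument is given for this, and its unconditional form is false. In the paper's own Example~\ref{threecases}(c) (the cube), the restrictions of the eight orders to a triple $\{a,b,c\}$ are exactly $abc,\ bac,\ cab,\ cba$; by condition (iii) of Theorem~\ref{equiv3items} the orders along any geodesic form a classical single-crossing sequence, and a single-crossing sequence running from $abc$ to $cba$ can contain $bac$ or $cab$ but not both. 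So no single shortest path realizes all triple-restrictions present in the profile, and the ``six convex sets'' analysis you hope will produce $\pi$ cannot work unconditionally; you would have to exploit the assumed majority cycle in an essential way, which is precisely the step left as ``should show'' and ``I expect to lean on.'' In addition, even granting $\pi$, ``the tallies are inherited'' is not literally true --- restricting $Q$ to the voters on $\pi$ changes the electorate and the counts; the statement would have to be repaired into ``every order of $Q$ restricted to $\{a,b,c\}$ lies in the single-crossing domain spanned by $\pi$, and single-crossing domains are Condorcet domains,'' which again rests on the unproved geometric claim. (A smaller issue: since $Q$ may omit some orders of $\mathcal{D}(P)$ and repeat others, the convex sets you use live in the graph for an extended profile, while majorities must be counted with $Q$'s multiplicities; the paper handles this by explicitly re-embedding $Q$ into an intermediate profile $\widehat{Q}$.)

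For contrast, the paper avoids any path-finding and proves transitivity directly: assuming $a\succ b$ and $b\succ c$ in the majority relation, it notes $V_{ab}\cap V_{bc}\neq\emptyset$, and then studies the $ac$-cut edges. If $\widehat{V}_{ca}$ met both $\widehat{V}_{ab}$ and $\widehat{V}_{bc}$, convexity of $\widehat{V}_{ab}$ and $\widehat{V}_{bc}$ (together with Lemma~\ref{cuts_distances} and Corollary~\ref{layers}) would force some $ac$-cut edge to have both endpoints ranking $a$ above $b$ and $b$ above $c$, hence $a$ above $c$ at both endpoints --- contradicting that the edge is an $ac$-cut. This yields $V_{ab}\subseteq V_{ac}$ or $V_{bc}\subseteq V_{ac}$, hence $a\succ c$. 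If you want to salvage your approach, you either need a genuine argument deriving the existence of such a geodesic from the cycle hypothesis, or you should switch to an edge-level argument of the paper's type (or carry out in detail the Mulder-expansion induction you mention only as a fallback).
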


\begin{proof}
By Lemma~\ref{l1} we may assume that the profile $P$ is reduced. Let $Q=(Q_1^{s_1},\ldots, Q_m^{s_m})$ be a profile with $Q_i\in \mathcal{D}(P)$ and $Q_i\ne Q_j$ for all $i\ne j$.  We can add linear orders $Q_{m+1},\ldots, Q_n$ so that the extended profile $ \widehat{Q}=(\row Qn)$ is intermediate on a median graph $G=(\widehat{V},E)$, where $\widehat{V}=\{1,\ldots,n\}$ (renumeration of vertices may be necessary). We will also denote $V=\{1,\ldots,m\}$. 

Suppose  that  $a\succ b$ and $b\succ c$, where $\succ $ is the majority relation for $Q$. Then we have two partitions of $V$, namely, $V=V_{ab}\cup V_{ba}$ and $V=V_{bc}\cup V_{cb}$. We have $|V_{ab}|>|V_{ba}|$ and $|V_{bc}|> |V_{cb}|$. Obviously, $V_{abc}=V_{ab}\cap V_{bc}\ne \emptyset$ since each of these sets contains a majority of voters in $V$. We have $V_{ab}\subseteq \widehat{V}_{ab}$ and $V_{bc}\subseteq \widehat{V}_{bc}$, where $\widehat{V}_{xy}=\{i\in \widehat{V}\mid x\succ_i y\}$. We cannot claim that $\widehat{V}_{ab}$ or $\widehat{V}_{bc}$ contains more than half of all elements of $\widehat{V}$ but we know that $\widehat{V}_{ab}\cap\widehat{V}_{bc}\supseteq V_{abc}\ne \emptyset$. 

Consider now $\widehat{V}_{ac}$ and $\widehat{V}_{ca}$. Suppose there are several edges $e_s=u_sv_s$ ($s=1,\ldots, t$), connecting them; all of them are $ac$-cuts. Suppose $\widehat{V}_{ac}\cap \widehat{V}_{ab}\ne \emptyset$ and $\widehat{V}_{ca}\cap \widehat{V}_{ab}\ne \emptyset$. Then for any $p\in \widehat{V}_{ac}\cap \widehat{V}_{ab}$ and $q\in \widehat{V}_{ca}\cap \widehat{V}_{ab}$ the shortest path between them includes one of the $ac$-cuts, say $e_i=u_iv_i$. This means that $a\succ_{u_i} b$ and  $a\succ_{v_i} b$. But then it is not possible to have $\widehat{V}_{ac}\cap \widehat{V}_{bc}\ne \emptyset$ and $\widehat{V}_{ca}\cap \widehat{V}_{bc}\ne \emptyset$ since in this case we would have $b\succ_{u_i} c$ and  $b\succ_{v_i} c$ and by transitivity $a\succ_{u_i} c$ and  $a\succ_{v_i} c$ which contradicts to $e_i$ being an $ac$-cut. Since $\emptyset \ne V_{abc}\subseteq \widehat{V}_{ac}\cap \widehat{V}_{bc}$, we have $\widehat{V}_{ca}\cap \widehat{V}_{bc}= \emptyset$ and $V_{bc}\subseteq V_{ac}$.

This means $a\succ c$ and $\succ$ is transitive.
\end{proof}

%Firstly,  we give a short proof of the so-called Representative Voter Theorem.

%\begin{theorem}[\cite{Demange2012}]
%\label{RVT}
%Let $n$ be odd. If a profile $P=(\row Pn)$  is single-crossing with respect to a tree $T=(V,E)$, then there exists $i\in \{1,\ldots,n\}$ such that  preference order $P_i$  coincides with  the majority relation. % $\succ $ associated with $P$. %In particular, the majority relation is transitive. 
%\end{theorem}

%\begin{proof}
%Firstly we note that there is a natural absract convexity on trees \cite{EdelmanJamison1985}: the set is called convex if it is connected.   It is easy to check that it satisfies the {\em Helly property} \cite{Bollobas1986}: if for any family $\mathcal H$ of convex sets in which every two sets have a non-empty intersection we have $\bigcap_{H\in \mathcal H} H \ne \emptyset$.

%Let, as above, $V_{xy}$ be the set of voters who prefer $x$ to $y$. These sets are convex.
%Consider the set of subsets ${\mathcal M}= \{V_{xy}\mid x\succ y\}$, where $\succ $ is the majority relation. Any two subsets $V_{xy}, V_{zt}\in {\mathcal M}$ have a nonempty intersection since each of them contains a majority of all voters. By the Helly property we have
%$
%\bigcap_{V_{xy}\in {\mathcal M}} V_{xy}\ne \emptyset.
%$
%Voters'  preferences in this intersection coincide with the majority relation.
%\end{proof}

Now we are going to prove that for any median graph we can construct a reduced profile which is intermediate on that particular graph. We then discuss how many alternatives may be needed.

\begin{theorem}  
\label{c-m-theorem}
For every median graph $G=(V,E)$ with $|V|=n$ there exists  a reduced preference profile $R=(\row Rn)\in \L(Y)^n$ on a set of alternatives $Y$ with $|Y|\le n$ such that $R$ is intermediate on $G$.
\end{theorem}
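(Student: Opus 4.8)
The plan is to induct on the number of convex expansions needed to build $G$ from a single vertex, which is available by Mulder's convex expansion theorem. In the base case $G$ is the one-vertex graph ($n=1$); take $Y$ to be a singleton and let $R$ consist of the unique linear order on it. This order is reduced and vacuously intermediate, and $|Y|=1\le n$.

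For the inductive step, suppose $G'=(V',E')$ is obtained from a median graph $G=(V,E)$ by a convex expansion with respect to convex sets $W_1,W_2$, with $W_0:=W_1\cap W_2\neq\emptyset$, so that each $v\in W_0$ is split into a vertex $v^1$ on the $W_1$-side and a vertex $v^2$ on the $W_2$-side. Write $\pi\colon V'\to V$ for the contraction that sends $v^1,v^2\mapsto v$ and fixes every other vertex, and put $\widehat{W_1}=(W_1\setminus W_0)\cup\{v^1\mid v\in W_0\}$ and $\widehat{W_2}=(W_2\setminus W_0)\cup\{v^2\mid v\in W_0\}$, so that $\{\widehat{W_1},\widehat{W_2}\}$ is a partition of $V'$. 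By the induction hypothesis there is a reduced profile $R$ on $G$ over an alternative set $Y$ with $|Y|\le|V|$ that is intermediate on $G$. Introduce a fresh alternative $z\notin Y$ and define a profile $R'$ on $G'$ over $Y'=Y\cup\{z\}$: a vertex $w$ with $\pi(w)=v$ gets the order $R_v$ with $z$ inserted in the top position if $w\in\widehat{W_1}$, and with $z$ inserted in the bottom position if $w\in\widehat{W_2}$.

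It remains to verify that $R'$ is reduced and intermediate on $G'$, and to count alternatives. Reducedness: two vertices with the same $\pi$-image are the copies $v^1,v^2$ of some $v\in W_0$ and disagree on every pair $\{z,y\}$; two vertices with distinct $\pi$-images $u\neq v$ inherit $R_u\neq R_v$, which differ on some pair inside $Y$, hence still differ in $R'$. Intermediateness: by Theorem~\ref{equiv3items}(ii) I must show that the set of vertices preferring $a$ to $b$ is convex in $G'$ for every ordered pair $(a,b)\in (Y')^2$. If $a=z$ this set is $\widehat{W_1}$ and if $b=z$ it is $\widehat{W_2}$, regardless of the other alternative; both are convex in $G'$, since a detour of a path through the opposite side is never shorter than the corresponding path that stays on one side (distances inside $\widehat{W_1}$ and inside $\widehat{W_2}$ equal the distances in $G$, because $W_1$ and $W_2$ are convex in $G$). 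If $a,b\in Y$, inserting $z$ does not change the comparison of $a$ and $b$, so this set equals $\pi^{-1}(V_{ab})$, and the heart of the argument is the claim that the $\pi$-preimage of every convex set $C\subseteq V$ is convex in $G'$. Granting this, $\pi^{-1}(V_{ab})$ is convex because $V_{ab}$ is, so $R'$ is intermediate on $G'$; and $|Y'|=|Y|+1\le|V|+1\le|V'|$, since the expansion adds $|W_0|\ge 1$ new vertices. This closes the induction.

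The main obstacle is the preimage claim, which I would prove by analysing a shortest path $P'$ in $G'$ between two vertices $x,y\in\pi^{-1}(C)$. If $x$ and $y$ lie on the same side, convexity of that side keeps $P'$ on it, and since that side is an isometric copy of the convex subgraph $G[W_1]$ (resp. $G[W_2]$) of $G$, the image $\pi(P')$ is a shortest path in $G$ between $\pi(x)$ and $\pi(y)$, hence it lies in $C$; lifting back places $P'$ in $\pi^{-1}(C)$. If $x$ and $y$ lie on opposite sides, a shortest path crosses the matching $\{v^1v^2\mid v\in W_0\}$ exactly once, say at $u^1u^2$; then $\pi(P')$ is the concatenation of a shortest $\pi(x)$--$u$ path in $W_1$ with a shortest $u$--$\pi(y)$ path in $W_2$, and one checks that $u$ lies on a shortest $\pi(x)$--$\pi(y)$ path in $G$, so this concatenation is again a shortest path in $G$, lies in $C$, and lifting gives $P'\subseteq\pi^{-1}(C)$. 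The two structural facts used here---that the sides $\widehat{W_1},\widehat{W_2}$ of a convex expansion are convex, and that a shortest path joining them uses exactly one expansion edge---are standard properties of convex expansions of median graphs and follow from the equality of internal and ambient distances noted above.
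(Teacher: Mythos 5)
Your proof is correct and follows essentially the same route as the paper: induction along Mulder's convex expansion theorem, adding one new alternative per expansion that separates the two sides of the split, which yields the bound $|Y|\le n$. The only differences are cosmetic---you insert a fresh alternative $z$ at the top/bottom of the orders instead of cloning an existing alternative, and you verify intermediateness via condition (ii) of Theorem~\ref{equiv3items} (convexity of the sets $V_{ab}$, using the standard facts that the two sides of a convex expansion are convex and that geodesics project to geodesics under contraction) rather than via the betweenness condition (i); both verifications rest on the same structural properties of convex expansions that the paper also invokes.
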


\begin{proof}
Since the statement is true for the trivial graph consisting of a single vertex, arguing by induction, we assume that  the statement is true for all median graphs with  $k$ vertices or less.  Let $G'=(V',E')$ be a median graph with $|V'|=k+1$. By Mulder's theorem $G'$ is a convex expansion of  a certain median graph $G=(V,E)$ relative to convex subsets $W_1$ and $W_2$, where $|V|=\ell\le k$. By induction there exists a reduced profile $R=(\row R{\ell})\in \L(X)^\ell$ with $|X| \le k$ which is intermediate on $G$.

To obtain a new profile $R'$ which is intermediate on $G'$ we clone an arbitrary alternative $x\in X$ and introduce a clone $y\notin X$ of $x$.\footnote{We say that $x$ and $y$ are clones if they are neighbours in any linear order of the domain (cf. Elkind et al (2011) and (2012)).}  The  linear orders of the new profile $R'$ will be constructed as follows.  If $v$ is a vertex of $W_1\setminus W_2$ to obtain $R'_v$ we replace $x$ with $xy$ in $R_v$ placing $y$ lower than $x$ and to obtain $R'_u$ for  $u\in W_2\setminus W_1$ we replace $x$ with $yx$ in $R_u$ placing $y$ higher than $x$. 
Let  $v$ now be in $W_1\cap W_2$. In the convex expansion this vertex is split into $v^1$ and $v^2$. To obtain $R'_ {v^1}$ we clone the linear order $R_v$ replacing $x$  with $xy$ and to obtain $R'_ {v^2}$ we clone the same linear order $R_v$ replacing $x$  with $yx$. We have $Y=X\cup \{y\}$ so the number of alternatives has increased by one, so it is not greater than $k+1=|V'|$. The profile $R'$, so obtained, is also reduced.

To prove that $R'$ is intermediate on $G'$. 
we need to consider several cases. For example, let $u\in W_1\setminus W_2$ and $v\in W_2\setminus W_1$ and let us prove that all linear orders on a shortest path between $u$ and $v$ are between $R'_u$ and $R'_v$. Firstly, we note that any such shortest path will contain an edge $e=w^1w^2$, where $w\in W_1\cap W_2$. There is the corresponding path in $G$, where the edge $e$ is contracted to $w$ and all linear orders on that path are between $R_u$ and $R_v$. The way we placed $y$ in these linear orders will not disturb the betweenness since all linear orders $R'_z$ between $R'_u$ and $R_{w^1}$ (inclusive) will have  $x\succ'_z y$ and all others will have  $y\succ'_z x$. All other cases are considered similarly. 
\end{proof}

The constant $n$ in this theorem  cannot be improved even for trees. However we must exclude some trivial cases. For example, if all the linear orders in a profile are equal, then it is intermediate for any graph. So we have to restrict ourselves to reduced profiles. 

\begin{theorem}
For the star $S_n$ with $n$ vertices (the complete bipartite graph $K_{1,n{-}1}$)  there does not exist a reduced profile $R=(\row Rn)\in \L(X)^n$ on the set $X$ of alternatives of cardinality smaller than $n$ which is intermediate on $S_n$.
\end{theorem}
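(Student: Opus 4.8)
The plan is to argue by contradiction: suppose $R$ is a reduced profile on $S_n$ with alternative set $X$, $|X| = m < n$, that is intermediate on $S_n$. Label the center of the star by $0$ and the leaves by $1, \ldots, n{-}1$. The key structural observation is that for every pair of leaves $i, j$, the center $0$ lies on the (unique) shortest path between them, so by condition (i) of Theorem~\ref{equiv3items}, $R_0$ is between $R_i$ and $R_j$ for all leaves $i \ne j$. In particular, for any pair of alternatives $(a,b)$, if two distinct leaves both rank $a \succ b$, then so does $R_0$; equivalently, at most one leaf can disagree with $R_0$ on any ordered pair $(a,b)$.

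Next I would translate this into a counting argument on the ``difference'' between each leaf order and the center order. Since the profile is reduced, each $R_i$ ($i$ a leaf) differs from $R_0$, and the disagreement set $D_i = \{(a,b) \in X^2 : a \succ_0 b \text{ and } b \succ_i a\}$ is nonempty. Because $R_i$ and $R_0$ are linear orders that are ``between''-comparable through the center in only trivial ways, I expect each $D_i$ to correspond to flipping a single adjacent transposition in $R_0$ — more precisely, the set of pairs on which $R_i$ differs from $R_0$ must be closed under the betweenness constraint, and the ``at most one leaf disagrees'' property forces the $D_i$ to be pairwise disjoint (if leaves $i \ne j$ both disagreed with $R_0$ on the same ordered pair $(a,b)$, i.e. both rank $b \succ a$ while $R_0$ ranks $a \succ b$, then $R_0$ would not be between $R_i$ and $R_j$ on that pair, contradiction). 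So the nonempty sets $D_1, \ldots, D_{n-1}$ (one per leaf) are pairwise disjoint subsets of the set of ordered pairs on which some order can differ from $R_0$.

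Now I need an upper bound of $n{-}2$ on the number of pairwise disjoint nonempty ``realizable'' disagreement sets, to contradict having $n{-}1$ leaves. The natural candidate: the only ordered pairs $(a,b)$ with $a \succ_0 b$ that can possibly be reversed by some order between-comparable to $R_0$ via a single leaf are the $m{-}1$ adjacent pairs in $R_0$ (reversing a non-adjacent pair of $R_0$ forces reversing intermediate pairs too, by transitivity of $R_i$, so $D_i$ containing one adjacent pair and the induced consequences — and disjointness of the $D_i$ — pins down at most $m-1$ leaves). Since $m - 1 < n - 1$, we get at most $m - 1 \le n - 2$ leaves carrying a distinct nonempty disjoint $D_i$, contradicting the assumption that all $n{-}1$ leaves have distinct orders each differing from $R_0$.

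\textbf{Main obstacle.} The delicate step is making precise the claim that each leaf's disagreement with $R_0$ is ``essentially'' a single adjacent transposition and that distinct leaves must use distinct transpositions. One has to rule out a leaf order $R_i$ that differs from $R_0$ by a more complicated permutation but in a way still compatible with ``at most one leaf disagrees on each pair'' — e.g. two leaves whose disagreement sets are disjoint but each of size larger than one. I would handle this by analyzing the structure directly: if $R_i$ reverses (relative to $R_0$) a pair $(a,b)$ that is non-adjacent in $R_0$, pick $c$ strictly between $a$ and $b$ in $R_0$; then $R_i$ must reverse at least one of $(a,c)$, $(c,b)$ as well, and by iterating one finds that $D_i$ always contains an adjacent pair of $R_0$; combined with pairwise disjointness of the $D_i$ and the fact that $R_0$ has only $m-1$ adjacent pairs, the bound follows. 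The careful bookkeeping here — confirming that the adjacent-pair assignment $i \mapsto$ (an adjacent pair in $D_i$) is injective — is where the real work lies, but it is elementary once the disjointness of the $D_i$ is established from Theorem~\ref{equiv3items}(i).
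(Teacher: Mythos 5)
Your proposal is correct, and it takes a genuinely different route from the paper. The paper argues by induction on $n$: assuming a reduced intermediate profile on $S_{n+1}$ with only $n$ alternatives, it deletes the leaf realizing an $a_1a_s$-cut and then deletes the alternative $a_1$, and the bulk of the work goes into a case analysis showing the resulting profile on $S_n$ is still reduced, contradicting the induction hypothesis. You instead give a direct pigeonhole argument: since the center $0$ lies on the unique shortest path between any two leaves, Theorem~\ref{equiv3items}(i) forces $R_0$ to be between every pair of leaf orders, so the disagreement sets $D_i$ (pairs reversed by leaf $i$ relative to $R_0$) are pairwise disjoint; each $D_i$ is nonempty because the profile is reduced, and your descent argument (replace a reversed non-adjacent pair $(a,b)$ by $(a,c)$ or $(c,b)$ for $c$ between them in $R_0$) shows each $D_i$ contains a pair adjacent in $R_0$, of which there are only $m-1\le n-2$, fewer than the $n-1$ leaves. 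Note that your middle-paragraph expectation that each $D_i$ is ``essentially a single adjacent transposition'' is neither true in general nor needed; your final paragraph correctly relies only on ``$D_i$ contains an adjacent pair'' plus disjointness, which suffices. Your argument is shorter, avoids the delicate preservation-of-reducedness step of the paper's induction, and makes the quantitative reason transparent (a star with $\ell$ leaves needs at least $\ell+1$ alternatives); the paper's inductive scheme, on the other hand, is in the same spirit as its constructive convex-expansion proof of Theorem 3.7 and fits the surrounding machinery of cuts.
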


\begin{proof}
We reason by induction on $n$.  The base case is the star $S_2$ with two vertices for which the result is clear. 

%Consider the star graph $S_n$ on $n$ vertices whose one vertex has degree $n-1$ and all others are leaves.  Reasoning by induction 

Let us assume that for $S_n$, $n\ge 2$, we cannot construct a reduced profile with less than $n$ alternatives which is intermediate on $S_n$. Consider $S_{n+1}$ and suppose towards a contradiction that we can find a reduced profile $P=(\row P{n+1})$ with $n$ alternatives which is intermediate on $S_{n+1}$. Suppose that the vertices of $S_{n+1}$ are numbered so that voter 1's vertex has degree $n$ and voter 1 has preferences expressed by the linear order 
$a_1\succ_1 a_2\succ_1\ldots \succ_1 a_{n}$.  Suppose, first, that $s$ is the smallest number for which one of the edges, say the one connecting $1$ with $n+1$, is an $a_1a_s$-cut  for $S_{n+1}$; if such a number does not exist, $a_1$ can be removed from the profile and it will stay reduced. Then $a_s\succ_{n+1} a_1$ and $a_1\succ_i a_s$ for all $i\le n$ (otherwise intermediateness fails). Let us remove now vertex $n+1$ from the graph and linear order $P_{n+1}$ from the profile. Then we get graph $S_n$ and the profile $P'=(\row Pn)$ which is intermediate on $S_n$.  No edge in $S_n$ is an $a_1a_s$-cut now. %Let $t>s$ be the smallest number for which $a_1a_t$-cut exist for $P'$ (if none, $a_1$ can be removed). 
Let us now remove the alternative $a_1$ from $P'$ to obtain a profile $P''=(P''_1,\ldots, P''_n)$. Then $P''$ is still intermediate on the star $S_n$ and we claim that $P''$ is reduced. If not, then after removal of $a_1$, at least two linear orders, say $P''_i$ and  $P''_k$ become equal. If $i=1$, then the only cut the edge $1k$ had was an $a_1a_t$-cut for some $t>s$. Then we had $a_t\succ_k a_1\succ_k a_s$ while $a_s\succ_1 a_t$; so we see that $1k$ had also a $a_sa_t$-cut. This contradicts $P''_1=P''_k$. % that proves the theorem. 

Now suppose $P''_i=P''_k$ with $i\ne 1$ and $k\ne 1$. Then $P_i$ and  $P_k$ must agree on $a_2,\ldots, a_n$. Due to intermediateness they must also agree with $P_1$ on these, i.e., $ a_2\succ_i\ldots \succ_i a_{n}$ and $ a_2\succ_k\ldots \succ_k a_{n}$.  Since $a_1$ cannot be on the top of each of them (otherwise one would be equal to $P_1$) we have $a_2\succ_i a_1$ and $a_2\succ_k a_1$, which contradicts intermediateness.

Hence $P''$ is reduced. Since $P''$ has $n-1$ alternatives, this is a contradiction.
\end{proof}

\section{Algorithmic aspects of single-crossedness}

\subsection{A recognition algorithm.}

The goal of this section is to give a polynomial-time algorithm for recognising intermediate profiles on median graphs. 
We need to know more about the structure of intermediate profiles on median graphs.

\begin{theorem}
Let $R=(\row Rn)\in \L(A)^n$ be a profile whose linear orders are indexed by the vertices of graph $G=(V,E)$. Let $a,b\in A$ be a pair of alternatives such that $\emptyset \ne V_{ab}\ne V$. Let also $e_i=u_iv_i$, $i=1,\ldots,s$ be the edges that connect $V_{ab}$ and $V_{ba}$ with $u_i\in V_{ab}$ and $v_i\in V_{ba}$. Then $G$ is a median graph and $R$ is intermediate on $G$ iff
\begin{enumerate}
\item[(i)] $\row us$ are all distinct and so are $\row vs$;
\item[(ii)] $S(e_1)=\ldots = S(e_s)$;
\item[(iii)] Induced graphs on $V_{ab}$ and $V_{ba}$ are median graphs in their own right;
\item[(iv)] $R_{ab}=\{R_u\mid u\in V_{ab}\}$ and $R_{ba}=\{R_v\mid v\in V_{ab}\}$ are the profiles which are intermediate on $V_{ab}$ and $V_{ba}$, respectively;
\item[(v)] Let $\bar{G}$ be the graph $G$ with edges $\row es$ contracted. Let $\bar{V}_{ab}$ and $\bar{V}_{ba}$ be images of $V_{ab}$ and $V_{ba}$ in $\bar{G}$. %(now these will have non-empty intersection). 
Then $G$ is the convex expansion of $\bar{G}$ with respect to $\bar{V}_{ab}$ and~$\bar{V}_{ba}$.
\end{enumerate}
\end{theorem}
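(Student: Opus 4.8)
I would prove the two implications separately. The direction ``$G$ median and $R$ intermediate $\Rightarrow$ (i)--(v)'' is comparatively direct and rests on what has already been assembled (Theorem~\ref{equiv3items}, Corollary~\ref{layers}, Lemma~\ref{cuts_distances}) together with elementary properties of median graphs (bipartiteness, triangle-freeness, convex subgraphs being median). The reverse direction first uses Mulder's expansion theorem to conclude that $G$ is median, and then a gluing argument to conclude that $R$ is intermediate; the gluing is where the real work lies.

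\textbf{Forward direction.} Since $V_{ab}$ and $V_{ba}$ partition $V$, each $e_i$ is an $ab$-cut, so (ii) is exactly Corollary~\ref{layers}. By Theorem~\ref{equiv3items}(ii) the sets $V_{ab},V_{ba}$ are convex in $G$; hence the induced subgraphs on them are median, which is (iii), and, because geodesic betweenness inside a convex set coincides with geodesic betweenness in $G$, the subprofiles $R_{ab},R_{ba}$ inherit intermediateness, which is (iv). For (i): if $u_i=u_j$ with $i\neq j$ then $v_i,v_j$ are distinct and non-adjacent (no triangles) with common neighbour $u_i\in V_{ab}$, so $u_i$ lies on a shortest $v_i$--$v_j$ path, contradicting convexity of $V_{ba}$; symmetrically the $v_i$ are distinct. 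For (v) the key point is that $u_iu_j\in E$ iff $v_iv_j\in E$: if $u_iu_j\in E$ then $v_i,u_i,u_j,v_j$ is a walk of length three with interior in $V_{ab}$, so convexity of $V_{ba}$ forbids it from being a geodesic, forcing $d(v_i,v_j)=1$; the converse is symmetric. Using this one checks, edge by edge, that contracting $e_1,\dots,e_s$ and then expanding $\bar G$ along $\bar V_{ab},\bar V_{ba}$ returns $G$; that the expansion is \emph{convex} follows because, by Lemma~\ref{cuts_distances}, the sets $V_{ab},V_{ba}$ are determined by distances to the endpoints of any single $e_i$, which is exactly what makes $\{e_1,\dots,e_s\}$ the edge set whose contraction inverts a convex expansion in the sense of Mulder's theorem.

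\textbf{Reverse direction.} Assume (i)--(v). By (v), $G$ is a convex expansion of $\bar G$, so by Mulder's theorem it suffices to show $\bar G$ is median. Write $\{c_1,\dots,c_s\}=\bar V_{ab}\cap\bar V_{ba}$ for the images of the contracted edges. Since $\bar V_{ab}$ and $\bar V_{ba}$ are convex in $\bar G$ (part of the definition of a convex expansion), any geodesic between two of the $c_i$ stays in their intersection, so $\{c_1,\dots,c_s\}$ is convex in $\bar G$, hence also in the subgraphs induced on $\bar V_{ab}$ and on $\bar V_{ba}$. Those two induced subgraphs are isomorphic to $G[V_{ab}]$ (via $c_i\mapsto u_i$) and to $G[V_{ba}]$ (via $c_i\mapsto v_i$) — here (v) forces $u_iu_j\in E\Leftrightarrow c_ic_j\in\bar E\Leftrightarrow v_iv_j\in E$, and (i) rules out cross-edges $u_iv_j$ with $i\neq j$ — so by (iii) both are median, and $\bar G$ is obtained by gluing them along the common convex subgraph $\{c_1,\dots,c_s\}$; since a convex amalgam of median graphs over a common convex subgraph is again median (a standard fact), $\bar G$, and therefore $G$, is median. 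It remains to prove that $R$ is intermediate on $G$, i.e.\ (Theorem~\ref{equiv3items}(ii)) that $V_{cd}$ is convex in $G$ for every ordered pair $(c,d)$. For $(c,d)=(a,b)$ this is immediate, since $V_{ab}$ is one side of the convex expansion (v). For the remaining pairs one writes $V_{cd}=(V_{cd}\cap V_{ab})\cup(V_{cd}\cap V_{ba})$; by (iv), together with convexity of $V_{ab},V_{ba}$ in $G$, each piece is convex in $G$, and what must be shown is that their union is convex.

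\textbf{The main obstacle.} The crux is exactly this last gluing step. A geodesic joining a vertex of $V_{cd}\cap V_{ab}$ to a vertex of $V_{cd}\cap V_{ba}$ crosses the $ab$-cut exactly once, at some $e_m=u_mv_m$, and staying inside $V_{cd}$ hinges on how $V_{cd}$ behaves at that edge. This is where condition (ii) is essential: because $S(e_1)=\dots=S(e_s)$, the edges $e_1,\dots,e_s$ are $cd$-cuts all together or not at all, so the way $V_{cd}$ meets the $ab$-cut is uniform. Combining this uniformity with the single-crossing structure on each side supplied by (iv) (via Theorem~\ref{equiv3items}(iii)), and recognising $V_{cd}$ itself as one side of a convex expansion — running the structural conditions for the pair $(c,d)$ in the role of $(a,b)$ — one obtains that $V_{cd}$ is convex. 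Organising this case analysis cleanly, and in particular verifying that the two halves are genuinely compatible along the $ab$-cut rather than merely convex in isolation, is the delicate part of the proof; everything else is bookkeeping.
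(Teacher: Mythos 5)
Your forward direction is essentially the paper's argument, worked out in more detail (the paper proves (i) via triangle‑freeness exactly as you do, cites Corollary~\ref{layers} for (ii), gets (iii) from convexity, and dismisses (iv)–(v) as obvious). In the reverse direction your route to medianness is different from the paper's: the paper just says ``$G$ is median due to Mulder's theorem,'' whereas you justify that $\bar G$ is median by viewing it as an amalgam of the two median subgraphs along the common convex set of contracted vertices and invoking the (correct, but external) fact that a gated/convex amalgam of median graphs is median. That is a reasonable way to fill in what the paper leaves implicit.

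The genuine gap is exactly the step you yourself label ``the main obstacle'': convexity of $V_{cd}$ for pairs $(c,d)\neq(a,b)$ is described but never proved, and the repair you hint at --- ``running the structural conditions for the pair $(c,d)$ in the role of $(a,b)$'' --- is not licensed by the hypotheses, which grant (i)–(v) only for the one fixed pair $(a,b)$. What must be shown is that when a geodesic from $u\in V_{cd}\cap V_{ab}$ to $v\in V_{cd}\cap V_{ba}$ crosses the cut at $e_i=u_iv_i$, both $c\succ_{u_i}d$ and $c\succ_{v_i}d$ hold, after which (iv) propagates agreement along the two halves; the paper extracts this from ``$(c,d)\notin S$'' (since $V_{cd}$ meets both sides), but that only rules out $u_i,v_i$ \emph{disagreeing} on $\{c,d\}$ --- the dangerous case is that $u_i$ and $v_i$ both prefer $d$ to $c$, and neither your sketch nor the paper's one‑liner excludes it from the stated hypotheses. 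Indeed it cannot be excluded locally: on the path $1$--$2$--$3$--$4$ with orders $abcd$, $abdc$, $badc$, $bacd$, conditions (i)–(v) all hold for the pair $(a,b)$ (one cut edge $23$; each side is a single edge, so (iii)–(iv) are trivial, and contracting $23$ and re‑expanding recovers the path), yet $V_{cd}=\{1,4\}$ is not convex, so the profile is not intermediate. So the equivalence only works if the conditions are imposed for every pair $(c,d)$ with $\emptyset\neq V_{cd}\neq V$ (which is how the recognition algorithm actually uses the theorem, recursing on both sides); your proof plan needs either that strengthened hypothesis made explicit, or a new argument at precisely the point you flagged --- and as written it supplies neither.
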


\begin{proof}
$\Longrightarrow$ 
(i) Suppose $v_i=v_j$, then $u_i\ne u_j$ and $(a,b)\in S(e_i)=S(e_j)$. Since $G$ is bipartite it has no triangles, hence $u_i v_i u_j$ is the shortest path between $u_i$ and $u_j$.  This contradicts to intermediateness since $R_{u_i}$ and $R_{u_j}$ agree on $(a,b)$ but $R_{v_i}$ disagrees with them.

(ii) is Corollary~\ref{layers}.

(iii) Follows from an observation that a shortest path between vertices $w, w'\in V_{ab}$ cannot involve any of $\row es$ and hence lies entirely in $V_{ab}$. 

(iv) and (v) are now obvious.

$\Longleftarrow$ Firstly, $G$ is median due to Mulder's theorem. To check intermediateness, let us consider a shortest path between $u$ and $v$ from $V$. If they are both in $V_{ab}$ or $V_{ba}$, this case is clear. Suppose  $u\in V_{ab}$ and $v\in V_{ba}$ and they agree on $(c,d)$ for some $c,d\in A$. The shortest path between $u$ and $v$ contains one of the edges $\row es$, say $e_i$ so it goes through vertices $u,u_i,v_i,v$.  Since elements of $V_{cd}$ are found both in $V_{ab}$ and $V_{ba}$ by Lemma~\ref{cuts_distances} we cannot have $(c,d)\in S$, where $S=S(e_1)=\ldots = S(e_s)$. Thus both $c\succ_{u_i} d$ and $c\succ_{v_i} d$. The part of the path connecting $u$ and $u_i$ is the shortest path in $V_{ab}$ so all linear orders on this path agree with $u$ and $u_i$ on $(c,d)$. So do the linear orders on the path from $v_i$ to $v$. This proves the theorem.
\end{proof}

The idea of the recognition algorithm is now clear. We give the construction only for a reduced profile $R=(\row Rn)$. The order of linear orders in the profile can be ignored so we actually deal with the domain ${\mathcal D}=\{\row Rn\}$. Firstly, we identify the graph $G_{\mathcal D}$ to be tested.  For this we find all pairs of `neighboring' linear orders. Any two linear orders $P,Q\in {\mathcal D}$ define the `interval' $[P,Q]$ as the set of all linear orders in ${\mathcal D}$ which are between $P$ and $Q$ and call $P$ and $Q$ {\em neighbors} if $[P,Q]=\{P,Q\}$. We draw edges between the neighboring orders and obtain the graph  $G_{\mathcal D}$ on linear orders from $\mathcal D$. The construction of this graph requires $O(m^2n^3)$ operations, where $m$ is the number of alternatives. 

If $R$ was intermediate on a median graph $G$, then  $G_{\mathcal D}$ is exactly $G$. Indeed, if $u$ and $v$ were neighbors in $G$, then $[R_u,R_v]=\{R_u,R_v\}$. If only $R_w\in [R_u,R_v]$ for $w\ne u$ and $w\ne v$, then, since $u$ and $v$ are neighbors, the median $m(u,v,w)$ is either $u$ or $v$. Suppose $m(u,v,w)=u$. Then, due to intermediateness, $R_u$ is between $R_w$ and $R_v$ from which $R_w=R_u$. On the other hand, if $u$ and $v$ are not neighbors in $G$, then it is easy to see that $[R_u,R_v]\ne \{R_u,R_v\}$.

We then pick any pair $(a,b)\in A$ for which $V_{ab}$ and $V_{ba}$ are both nonempty and select edges $e_i=u_iv_i$, $i=1,\ldots,s$  of $G_{\mathcal D}$. If all $u_i$'s and all $v_i$'s are different, and $S(e_1)=\ldots =S(e_s)$, then the question of whether or not $R$ is intermediate on $G_{\mathcal D}$ will be reduced to the questions of whether or not $R_{ab}=\{R_u\mid u\in V_{ab}\}$ and $R_{ba}=\{R_v\mid v\in V_{ba}\}$ are intermediate on graphs induced on $V_{ab}$ and $V_{ba}$, respectively.  This can be arranged as a recursive algorithm.  We have proved

\begin{theorem}
\label{conmintree}
For an input profile with $n$ voters and $m$ alternatives we can determine in time polynomial in $m,n$ whether or not the given profile is intermediate on some median graph, and, if so, construct this graph.
\end{theorem}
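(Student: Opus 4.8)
The plan is to turn the discussion above into a provably correct recursive algorithm and to do the running-time accounting carefully. First I would reduce to the case of a \emph{reduced} profile: bucket the $n$ voters by their linear order, at cost $O(mn^2)$ comparisons, and invoke Lemma~\ref{l1} (together with its converse, obtained by re-expanding each vertex of a witness into the right number of identical copies through convex expansions) to see that the input is intermediate on some median graph iff the reduced domain $\mathcal D=\{R_1,\ldots,R_d\}$ is, and that a witness for $\mathcal D$ yields one for the input. So from now on the profile is reduced and it suffices to decide whether $\mathcal D$ is intermediate on a median graph (necessarily with vertex set $\mathcal D$).

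Next I would build the candidate graph $G_{\mathcal D}$: for every ordered triple $(P,Q,W)$ of orders in $\mathcal D$ test whether $W\in[P,Q]$ by scanning all $O(m^2)$ ordered alternative-pairs, at cost $O(m^2)$ per triple and $O(m^2 d^3)=O(m^2 n^3)$ in total, and join $P,Q$ whenever $[P,Q]=\{P,Q\}$. The first claim to prove --- and, I expect, the main obstacle --- is that $G_{\mathcal D}$ is the \emph{unique} candidate: if $\mathcal D$ is intermediate on a median graph $G$, then $E(G)$ is exactly this neighbour relation. One inclusion is Theorem~\ref{equiv3items}(i). For the other, if $uv\notin E(G)$, take a $u$--$v$ geodesic in $G$ and a vertex $w$ on it with $m(u,v,w)\notin\{u,v\}$; intermediateness forces $R_w\in[R_u,R_v]\setminus\{R_u,R_v\}$, so $u,v$ are not neighbours; and if $uv\in E(G)$ then any $R_w\in[R_u,R_v]$ has $m(u,v,w)\in\{u,v\}$, whence $R_w\in\{R_u,R_v\}$. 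Thus it only remains to decide whether $\mathcal D$ is intermediate on $G_{\mathcal D}$, and to output $G_{\mathcal D}$ in the affirmative case and ``no'' otherwise.

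That last test I would implement as a recursion built directly on the structural characterisation proved just above this theorem. Given a graph $H$ and a reduced profile on its vertices: if $H$ has a single vertex, answer ``yes''; otherwise the profile is non-constant, so pick any pair $(a,b)$ with $\emptyset\ne V_{ab}\ne V(H)$, list the edges $e_i=u_iv_i$ of $H$ crossing from $V_{ab}$ to $V_{ba}$, and verify conditions (i), (ii), (v) of that theorem directly --- the $u_i$ pairwise distinct, the $v_i$ pairwise distinct, $S(e_1)=\cdots=S(e_s)$, and $H$ equal to the convex expansion of the contraction $\bar H$ along $e_1,\ldots,e_s$ with respect to the images of $V_{ab}$ and $V_{ba}$ (this subsumes connectivity, a convex expansion of a connected graph being connected). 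Conditions (iii)--(iv) say precisely that the induced subgraphs $H[V_{ab}]$, $H[V_{ba}]$ are median and carry the restricted profiles as intermediate profiles, so recurse on those two instances; their vertex sets partition $V(H)$, so both are strictly smaller and the recursion terminates. By the structural theorem (an ``iff'' for any one fixed pair $(a,b)$), the answer is ``yes'' iff every local check succeeds at every node, which is exactly the desired decision. Run this on $(G_{\mathcal D},\mathcal D)$.

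For the running time: the one-off construction of $G_{\mathcal D}$ costs $O(m^2 n^3)$, which dominates. The recursion tree is binary and its leaves induce a partition of $\mathcal D$, so it has at most $n$ leaves and $O(n)$ nodes; at each node, choosing $(a,b)$, extracting the crossing edges, and checking (i), (ii), (v) --- recognising a convex expansion being a purely local graph test --- costs $\mathrm{poly}(m,n)$, and the betweenness data can be read off the table computed once at the start. Summing over $O(n)$ nodes keeps the total polynomial in $m$ and $n$, and in the affirmative case the procedure has exhibited the median graph $G_{\mathcal D}$.
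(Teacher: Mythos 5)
Your proposal is correct and follows essentially the same route as the paper: construct the neighbour graph $G_{\mathcal D}$ from the betweenness relation (showing it is the unique possible witness), then recursively verify the structural characterisation by splitting along $ab$-cuts, with the same $O(m^2n^3)$ dominant cost. If anything you are slightly more careful than the paper's sketch, since you explicitly check the convex-expansion condition (v) at each recursion node and spell out the reduction to reduced profiles via Lemma~\ref{l1} and its converse.
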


\subsection{Chamberlin-Courant rule %of fully proportional representation
}

This section deals with single-crossing profiles on trees.

Given a society of $n$ voters $V$ with preferences over a set of $m$ candidates $A$ and a fixed positive integer $k\le m$, a method of fully proportional representation outputs a $k$-member committee (e.g., parliament), which is a subset of $A$, and assigns to each voter a candidate that  will represent this voter in the committee. The fully proportional representation rules, suggested by Chamberlin and Courant  \shortcite{cha-cou:j:cc} and Monroe \shortcite{mon:j:monroe}, have been widely discussed in Political Science and Social Choice literature alike   
\cite{pot-bra,mei-pro-ros-zoh:j:multiwinner,bet-sli-uhl:j:mon-cc}. 

It is well-known that on an unrestricted domain of preferences both rules are intractable in the classical  \cite{mei-pro-ros-zoh:j:multiwinner,bou-lu:c:chamberlin-courant} and parameterized complexity  \cite{bet-sli-uhl:j:mon-cc} senses.  \cite{SYFE2013}, however, showed that for the classical single-crossing  elections the winner-determination problem for the Chamberlin-Courant rule is polynomial-time solvable for every dissatisfaction function for both the utilitarian  \cite{cha-cou:j:cc} and egalitarian \cite{bet-sli-uhl:j:mon-cc} versions of the rule. They also generalized this result to elections with bounded single-crossing width proving fixed-parameter tractability %\cite{dowfel} 
of the Chamberlin-Courant  rule with single-crossing width as parameter.  The concept of single-peaked width was defined in \cite{CornazGS12}. 

Here we will prove that polynomial solvability remains for single-crossing profiles on any tree. But, firstly, we will remind the reader the definitions needed for discussing the Chamberlin-Courant rule. By $\text{pos}_v(c)$ we denote the position of the alternative $c$ in the ranking of voter $v$; the top-ranked alternative has position 1, the second best has position 2, etc.

\begin{definition}
\label{def:mf}
Given a profile $P$ over set~$A$  of alternatives, a mapping $r\colon P\times A \to \mathcal{Q}^+_0$
is called {\em a misrepresentation function} if for any voter $v\in V$
and candidates $c,c'\in A$ the condition
$\text{pos}_v(c)<\text{pos}_v(c')$ implies $r(v,c)\le r(v,c')$. 
\end{definition}
In the classical framework the misrepresentation of a candidate
for a voter is a function of the position of the candidate in the preference
order of that voter given by ${\bf s}=(\row s{m})$, where $0=s_1\le s_2\le \ldots \le s_m$, that is, the misrepresentation function in this case will be
$
r(v,c)=s_{\text{pos}_v(c)}.
$
Such a misrepresentation function is called {\em positional}.
An important case is the \emph{Borda misrepresentation  function} defined by the vector
 $(0,1,\ldots, m-1)$ which was used in  \cite{cha-cou:j:cc}. 
We assume that the misrepresentation function is defined for any number of alternatives and that it is polynomial-time computable.

In the approval voting framework, if a voter is represented by a
 candidate whom she approves, her misrepresentation is  zero, otherwise it is equal to one.  This function is called
 the \emph{approval misrepresentation function}.  It does not have to be positional since
 different voters may approve different numbers of candidates. 
In the general framework the misrepresentation function may be
arbitrary.  

By $w\colon V\to A$ we denote the function that assigns voters to representatives, i.e., under this assignment voter $v$ is represented by candidate $w(v)$. If $|w(V)|\le k$ we call it a $k$-{\em assignment}. The total misrepresentation $\Phi(P,w)$ of the given election under~$w$ is then given by
$
\Phi(P,w)=\ell( r(v,w(v))),
$
%$
%\Phi(P,w)=\sum_{v \in V} r(v,w(v))\quad \text{or}\quad \Phi(P,w)= \max_{v \in V} r(v,w(v))
%$
where $\ell$ is used to mean either the sum or the maximum of a given list of values  depending on the utilitarian or egalitarian model, respectively.

The Cham\-ber\-lin-Courant rule takes the profile and the number of representatives to be elected $k$ as input and outputs an optimal $k$-assignment  $w_\text{opt}$ of voters to representatives that minimizes  the total misrepresentation $\Phi(P,w)$.  %We will prove the following theorem.

\begin{theorem}
\label{CCtheorem}
For any polynomial-time computable dissatisfaction function, every positive integer $k$, and for both utilitarian and egalitarian versions of the Chamberlin-Courant rule, there is a polynomial-time algorithm that given a profile $P=(\row Pn)$ over a set of alternatives $A=\{\row am\}$, which is single-crossing with respect to some tree,  finds an optimal $k$-assignment function $w_\text{opt}$ for~$P$.
\end{theorem}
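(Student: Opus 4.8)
The plan is to mimic the dynamic-programming strategy of Skowron et al.~\shortcite{SYFE2013} for the classical (line) case, but to run it along the tree rather than along a line. First I would record the key structural fact that makes the line algorithm work and check its tree analogue: if a profile is single-crossing on a tree $T$, then for every candidate $c\in A$ and every threshold value $t$ the set of voters $v$ with $r(v,c)\le t$ is geodesically convex in $T$ — indeed, ``$r(v,c)\le t$'' is (by the definition of a misrepresentation function) equivalent to ``$c$ is ranked by $v$ above some fixed set of alternatives'', which by Theorem~\ref{equiv3items}(ii) is an intersection of convex sets $V_{c x}$ and hence convex. In a tree, a convex set is exactly a subtree. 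Consequently, if $w_\text{opt}$ is an optimal $k$-assignment, then for each chosen representative $c$ the set $w_\text{opt}^{-1}(c)$ can be taken to be a subtree of $T$ (if two voters are assigned to $c$, reassigning every voter on the path between them to $c$ does not increase anyone's misrepresentation, by the same convexity argument). So an optimal assignment partitions $T$ into at most $k$ edge-disjoint subtrees, each labelled by one representative.

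Next I would root $T$ at an arbitrary leaf $\rho$ and set up a dynamic program over subtrees. For a vertex $v$ let $T_v$ be the subtree hanging below $v$. The subproblem is: what is the minimum total misrepresentation achievable on $T_v$ using at most $j$ distinct representatives, given that the ``block'' (subtree of the partition) containing $v$ is required to use representative $c$? Call this $A[v,j,c]$. Because the blocks are subtrees, the block containing $v$ reaches into $T_v$ through some subset of $v$'s children: for each child $u$ of $v$ we either continue the same block downward (so $u$'s block is also labelled $c$, costing $A[u, j', c]$ for an appropriate split of the budget) or we start a fresh block at $u$ with some new representative $c'$ (costing $\min_{c'} A[u, j'', c']$ and consuming part of the budget). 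This is a ``knapsack over the children'' step: I would process the children of $v$ one at a time, maintaining a table indexed by the number of representatives used so far in the processed part, which keeps the per-vertex work polynomial. For the egalitarian ($\ell=\max$) version the recursion is identical with sums replaced by maxima and the budget-splitting replaced by just taking the max over children. The answer is $\min_{j\le k}\min_{c} A[\rho, j, c]$, and the argmin bookkeeping reconstructs $w_\text{opt}$.

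For correctness I would argue both directions: any assignment with the subtree-block structure is evaluated by some run of the recursion (straightforward induction on $T_v$), and conversely by the convexity/subtree lemma above there is an optimal assignment with that structure, so the DP value equals the true optimum. I would also note one subtlety: the number of distinct candidates is bounded by $m$, the recursion has $O(nmk)$ states, and the knapsack-over-children step costs $O(\deg(v)\cdot k\cdot m)$ per vertex, giving an overall bound polynomial in $n$, $m$, $k$ (with misrepresentation-function evaluations assumed polynomial, as stated).

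The main obstacle I anticipate is the ``knapsack over children'' accounting for the utilitarian version: one must be careful that when the block of $v$ is split among several children we are not double-counting the representative $c$ against the budget $j$, and that when a child starts a new block we correctly charge exactly one new representative; getting the recursion to avoid counting the same representative multiple times (a representative's subtree is connected, so it is ``born'' at exactly one vertex) is the delicate point, but it is handled by the convention that $A[v,j,c]$ charges $c$ to the budget only at the vertex where $c$'s block is rooted. Once that convention is fixed the rest is a routine — if slightly tedious — verification, and the egalitarian case is strictly easier. I would also remark that this argument uses nothing about trees beyond ``convex set $=$ subtree'', which is precisely why the generalization to arbitrary median graphs is not immediate: there a convex set need not decompose the graph into a bounded number of pieces meeting nicely, so the same DP does not go through, consistent with the paper's remark that the median-graph case is open.
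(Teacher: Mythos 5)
Your overall strategy works, but it is genuinely different from the paper's. The paper fixes one distinguished leaf (voter~1), proves (Lemma~\ref{termsubtr}) that in an optimal assignment the voters represented by voter~1's \emph{least preferred elected} candidate form a terminal subtree, namely $V_{ba}$ for some $ab$-cut, and then runs a dynamic program whose states are indexed by cuts, by a prefix $\{a_1,\dots,a_j\}$ of voter~1's ranking, and by the committee-size budget; this is a direct lift of the line algorithm of \cite{SYFE2013}, it peels off one terminal subtree at a time, and it needs no merging step (complexity $O(mn^2k)$). You instead argue that \emph{all} fibers of an optimal assignment may be assumed connected and run a rooted-tree DP with a knapsack-style merge over children, with states (vertex, budget, label of the block through the vertex). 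That is a viable alternative: it pays for the children-merge bookkeeping that the paper's peeling argument avoids, but it does not need to order candidates by a distinguished voter's ranking.

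Two points in your write-up need repair, though neither is fatal. First, your justification of the structural lemma is not correct as stated: for non-positional misrepresentation functions the sublevel sets $\{v\mid r(v,c)\le t\}$ need not be convex (take the approval misrepresentation function with voter-dependent approval thresholds on a path whose three voters share the same ranking), and ``reassign every voter on the path between two $c$-voters to $c$'' can strictly increase cost when the endpoint voters are assigned to $c$ only because of ties in $r$. The correct repair is standard: given the optimal committee $S=w_{\text{opt}}(V)$, reassign every voter to her \emph{top-ranked} member of $S$; by Definition~\ref{def:mf} no value $r(v,\cdot)$ increases (so both the utilitarian and the egalitarian objectives are preserved), and the fiber of each $c\in S$ becomes $\bigcap_{c'\in S\setminus\{c\}}V_{cc'}$, which is convex by Theorem~\ref{equiv3items}(ii), hence a subtree. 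Second, in the egalitarian version only the cost aggregation becomes a maximum; the number of distinct representatives remains additive across children, so the budget-splitting knapsack over children cannot be ``replaced by just taking the max over children'' --- keep the same merge and change only the aggregation $\ell$. With these repairs (and your charging convention that a representative is counted only at the vertex where its block is rooted, which is fine), your DP is correct and polynomial, and your closing observation about why the argument does not extend to median graphs agrees with the paper's own open problem.
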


%We need to prove the following lemma first.

\begin{lemma}
\label{termsubtr}
Let $P=(\row Pn)$ be a reduced profile over a set~$A=\{\row am\}$ of alternatives, where $P$ is single-crossing with respect to a  tree $T$. Let $w_\text{opt}$ be an optimal $k$-assignment for $P$.  Let voter 1 be an arbitrary vertex of $T$, and let $b\in A$ be the least preferred alternative of voter 1 in $w_\text{opt}(V)$. Then the vertices of $w_\text{opt}^{-1}(b)$ are vertices of a terminal subtree (a subtree whose vertex-complement is also a subtree) of~$T$.
\end{lemma}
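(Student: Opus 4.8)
The plan is to root the tree $T$ at the vertex of voter $1$ and show that $w_\text{opt}^{-1}(b)$ is "downward closed" away from voter $1$: once we move away from voter $1$ and hit a voter assigned to $b$, every voter further out in that direction is also assigned to $b$. Equivalently, I will show that the complement $V\setminus w_\text{opt}^{-1}(b)$ is a subtree containing voter $1$, which is exactly the statement that $w_\text{opt}^{-1}(b)$ spans a terminal subtree.

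First I would set up the key preference-ordering fact. For every alternative $c\in w_\text{opt}(V)$ with $c\neq b$, voter $1$ prefers $c$ to $b$ (by choice of $b$ as the least-preferred element of $w_\text{opt}(V)$ for voter $1$). By the intermediateness property on the tree (Theorem~\ref{equiv3items}(i), applied via the characterization that every voter on the path between two voters who agree on a pair also agrees), the set $V_{cb}=\{v : c\succ_v b\}$ is connected in $T$ and contains voter $1$; hence $V_{bc}=\{v: b\succ_v c\}$ is a terminal subtree of $T$ not containing voter $1$. Now take any voter $u$ with $w_\text{opt}(u)=b$, and let $v$ be a voter on the path from $u$ toward voter $1$ (so $v$ is geodesically between voter $1$ and $u$). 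I claim $w_\text{opt}(v)=b$ as well. Suppose not; let $c=w_\text{opt}(v)\neq b$. I will derive a contradiction by exhibiting a cheaper or no-worse assignment: reassign $v$ to $b$ and, to keep the committee of size $\le k$ and not increase cost, simultaneously handle the candidate $c$.

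The core swap argument: consider modifying $w_\text{opt}$ by changing $w_\text{opt}(v)$ from $c$ to $b$. Since $u$ is assigned $b$, we have $b\succ_u$ (everything below $b$ in $u$'s order), but more usefully: because $v$ lies between voter $1$ and $u$, and because $b$ is voter $1$'s least preferred committee member while $u$ is assigned $b$, intermediateness forces $b$ to be ranked by $v$ no lower than it is by $u$ relative to the other committee members — in particular, for the specific alternative $c$, we need $c\succ_v b$ would put $v\in V_{cb}$, a connected set containing voter $1$; but then the subpath from $v$ to $u$ would have to leave $V_{cb}$ to reach $u$ (since $c\succ_u b$ is false, as $u$ picked $b$ over $c$ only if $b\succ_u c$, which holds because $b=w_\text{opt}(u)$ and optimality means $u$'s misrepresentation by $b$ is at most that by $c$; for a reduced single-crossing profile with a misrepresentation function this forces $b\succeq_u c$ in position, hence $b\succ_u c$). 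So $u\notin V_{cb}$ while $v\in V_{cb}$, and voter $1\in V_{cb}$; this contradicts convexity of $V_{cb}$ since $v$ is on the path between voter $1$ and $u$. Therefore $b\succ_v c$, so $r(v,b)\le r(v,c)$, and reassigning $v$ to $b$ does not increase $\Phi$ and does not increase $|w(V)|$. Iterating this swap along every path from a $b$-voter toward voter $1$ produces an optimal assignment in which $w_\text{opt}^{-1}(b)$ is terminal; and one checks the original $w_\text{opt}$ already had this property, since if some intermediate $v$ were not assigned $b$ we could strictly decrease the committee size or strictly decrease cost unless $b\succ_v c$, which we just showed, giving $r(v,b)\le r(v,c)$ and hence we may assume $v$ is assigned $b$ by choosing among optimal assignments one that is "$b$-maximal."

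The step I expect to be the main obstacle is making the swap genuinely cost-non-increasing while respecting the committee-size constraint: reassigning $v$ to $b$ is free in cost by the argument above, but I must ensure I am not secretly relying on $c$ still being used elsewhere to keep things consistent — if $v$ was the only voter assigned $c$, the swap actually \emph{shrinks} the committee, which is fine, but I should phrase the conclusion as "there exists an optimal assignment with the terminal-subtree property" rather than "every optimal assignment has it," OR strengthen the preference argument to show strict preference $b\succ_v c$ always, so that any optimal assignment must already satisfy it (else the swap strictly helps or strictly shrinks the committee, and a strict shrink plus equal cost contradicts nothing — so I do need the "choose a $b$-maximal optimum" framing, or equivalently argue that the lemma is used downstream only for \emph{some} optimal $w_\text{opt}$). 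I would resolve this by stating the lemma for a suitably chosen optimal assignment (lexicographically minimizing, say, the size of $w^{-1}(b)$-complement toward voter $1$), which is what the dynamic-programming algorithm in the proof of Theorem~\ref{CCtheorem} will exploit.
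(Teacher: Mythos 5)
Your argument has two genuine structural problems. First, the claim you actually set out to prove is oriented the wrong way: you take $u$ with $w_\text{opt}(u)=b$ and claim every $v$ geodesically between voter $1$ and $u$ also satisfies $w_\text{opt}(v)=b$. That statement is false: on the path $1-2-3$ with committee $\{a,b\}$, where voters $1,2$ rank $a\succ b$ and voter $3$ ranks $b\succ a$, the optimal assignment has $w(3)=b$ but $w(2)=a$, with $2$ between $1$ and $3$. Correspondingly, the ``contradiction'' you derive is a non sequitur: from $1,v\in V_{cb}$ and $u\notin V_{cb}$ with $v$ on the path from $1$ to $u$ you conclude convexity of $V_{cb}$ is violated, but convexity only constrains vertices lying \emph{between two members} of $V_{cb}$, and $u$ is not such a vertex; nothing forbids the path from $v$ to $u$ from leaving $V_{cb}$. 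The usable claim is the one in your opening sentence (if $v$ is assigned $b$ and $v$ lies between $1$ and $u$, then $u$ is, or can be, assigned $b$): there the convexity argument does work, since $c\succ_u b$ together with $c\succ_1 b$ forces $c\succ_v b$ by convexity of $V_{cb}$, clashing with optimality at $v$ (modulo ties). Second, even with the orientation fixed, ``$V\setminus w_\text{opt}^{-1}(b)$ is a subtree containing voter $1$'' is \emph{not} equivalent to the lemma: a terminal subtree requires $w_\text{opt}^{-1}(b)$ itself to be a subtree as well, and your proposal never establishes its connectedness, which is precisely the part the dynamic program in Theorem~\ref{CCtheorem} exploits.

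The paper's proof avoids both issues by a more global argument: take the $b$-assigned vertex $v$ closest to voter $1$, let $u$ be its neighbour toward voter $1$, observe that the edge $(u,v)$ is an $ab$-cut for $a=w_\text{opt}(u)$ with $a\succ_1 b$, and then show $w_\text{opt}^{-1}(b)=V_{ba}$. Convexity of $V_{ba}$ and $V_{ab}$ then gives both subtrees simultaneously and, more importantly, identifies $w_\text{opt}^{-1}(b)$ as one side of a single cut, which is the form used in the recursion. One point in your favour: your worry about ties is legitimate — $r(u,b)\le r(u,c)$ does not imply $b\succ_u c$ for non-injective misrepresentation functions such as approval, and your attempted deduction ``this forces $b\succeq_u c$ in position'' is invalid; the paper's own proof glosses over the same issue (``otherwise $v'$ would be assigned $a$''), and your proposal to fix it by choosing a suitable optimal assignment (or stating the lemma for \emph{some} optimum, which is all the algorithm needs) is sound. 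But that repair does not close the two gaps above, so the proof as written does not go through.
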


\begin{proof}
On a tree $T$ we may define the distance between any two vertices $u$ and $v$ which is the number of edges on the unique path connecting these two vertices. 
Let $v$ be the closest vertex to $1$ such that $w_\text{opt}(v)=b$. Let $u$ be the vertex on that path which is one edge closer to $1$ (it can be actually $1$ itself). Due to minimality of $T$ the edge $(u,v)$ is an $(a,b)$-cut for some $a= w_\text{opt}(u)$ where $a\succ_1 b$. Let us show that $w_\text{opt}^{-1}(b)=V_{ba}$. Suppose first that $w_\text{opt}(v')=b$. Then $b\succ_{v'} a$ (otherwise $v'$ would be assigned $a$) and hence $v'\in V_{ba}$. Suppose now $v'\in V_{ba}$. Then $v'$ is connected by a path within $V_{ba}$, hence the unique path between 1 and $v'$ passes through $v$. Since linear orders on this path form a classical single crossing subprofile we have $w_\text{opt}(v')=b$.
\end{proof}

The fact that the least preferred alternative of voter 1 in the elected committee $w(V)$ represents voters in a terminal subtree of $T$ is important in our design of a dynamic programming algorithm. We will fix an arbitrary leaf, without loss of generality it will be voter 1, and reduce the problem of calculating an optimal assignment for a profile $P=(\row Pn)$ over a set of alternatives $A=\{\row am\}$ to a partially ordered set of subproblems which will be defined shortly. Let us denote the original problem as $(P,A,k)$. We define the set of subproblems as follows. A triple $(P',A',k')$, where $k'\le k$ is a positive integer, $P'$ is a subprofile of $P$ and $A'$ is a subset of $A$ is a subproblem of $(P,A,k)$ if 
\begin{enumerate}
\item $P'=P_{ab}$, the subprofile of linear orders corresponding to the subset of vertices $V_{ab}\subseteq V$, where ${a\succ_1 b}$;
\item $A'=\{\row a{j}\}$ for some positive integer $j$ such that $k\le j\le n$;
\item the goal is to find an optimal assignment $w'\colon V_{ab}\to A'$ with $|w'(V_{ab})|\le k'$.
\end{enumerate}

The idea behind this definition is based on Lemma~\ref{termsubtr}.   Namely, we can try to guess the least preferred alternative of voter 1 in the elected committee $a_j$ and the terminal subtree $V_{ba}$ whose voters are all assigned to $a_j$, then the problem will be reduced to choosing a committee of size $k-1$ among $\{\row a{j-1}\}$ given the profile $P_{ab}$ of voters corresponding to $V_{ab}$. We note that there are at most $n-1$ subproblems. Note that our cuts can be naturally ordered: we can say that $ \text{$ab$-cut}\subset \text{$cd$-cut}$ if and only if $V_{ab}\subset V_{cd}$.\par\smallskip

%Following \cite{SYFE2013} we can prove the main theorem of this section. \par\smallskip

\noindent{\it Proof of Theorem~\ref{CCtheorem}.}
The tree $T$ with respect to which $P$ is single-crossing can be computed in polynomial time, by Theorem~\ref{conmintree}. We can identify one of the leaves then. Let this be voter~1 with preferences  $a_1\succ_1\cdots\succ_1a_m$.

For every $ab$-cut, $j\in \{1,\ldots,m\}$ and $t\in \{1,\ldots,k\}$ we define $A[V_{ab},j,t]$ to be the optimal  dissatisfaction (calculated in the utilitarian or egalitarian way) that can be achieved with a $t$-assignment function  when considering a subprofile $P'=\{P_{ab}\,|\,1\in V_{ab}\}$ over $A'=\{\row aj\}$. It is clear that for $V\subseteq V$, $j\in M\setminus \{1\}$ and $t\in K\setminus \{1\}$, the following recursive relation holds
{\small
\begin{align*}
A[V,j,t]=&\min\left\{A[V,j-1,t],\min\limits_\text{$ab$-cut}\ell(A[V_{ab},j-1,t-1],\right. \\
%&\left.(r(v_i,a_j))_{v_i\in V_{ba}})\right\}.
&(r(v_i,a_j))_{v_i\in V_{ba}})\bigg\}.
\end{align*}
}
Here $\ell$ is used to mean either the sum or the maximum of a given list of values  depending on the utilitarian or egalitarian model, respectively. 
To account for the possibility that $a_j$ is not elected in the optimal solution, we also include the term $A[V, j - 1, t]$. Thus,
$A[V,m,k]$ is then the optimal dissatisfaction, and in calculating it we simultaneously find the optimal $k$-assignment function.

The following base cases are sufficient for the recursion to be well-defined
\begin{itemize}
\item $A[\varnothing,j,t]=0$;
\item $A[V,j,1]=\min\limits_{j'\leq j} \ell((r(v_i,a_{j'}))_{i\in V})$;
\item $A[V,j,t]=0$ for $t\geq j$.  %(everybody is assigned to their most preferred alternative).
\end{itemize}

These conditions suffice for our recursion to be well-defined. 
Using dynamic programming, we can compute in polynomial time, in fact, in time $O(mn^2k)$, the optimal
dissatisfaction of the voters and the assignment that achieves it. Note that the complexity is the same as for the classical case
 in \cite{SYFE2013} which we closely followed.

\section{Conclusion}

This paper generalises the classical single-crossing property to an intermediate property on median graphs and, in particular, to trees. We complement Demange's representative voter theorem with the fact that the set of linear orders of any intermediate profile on a median graph is a Condorcet domain.

We prove that, for any median graph, there exists a profile of preferences which is intermediate with respect to that particular graph. Finally we present two results on algorithmic aspects of single-crossedness. The first one states that recognising intermediateness on median graphs is possible in polynomial time, and the second shows that the winner determination problem for the Chamberlin-Courant rule is also polynomial for single-crossing profiles on trees. We do not know whether or not this latest result can be extended to intermediate profiles on median graphs.

\bibliographystyle{named}
\bibliography{cps}

\end{document}